\newtheorem{lemma}{Lemma}
\newtheorem{theorem}{Theorem}
\newtheorem{corollary}{Corollary}
\newcommand\floor[1]{\lfloor#1\rfloor}
\newcommand\ceil[1]{\lceil#1\rceil}
\def\etal{\emph{et~al.}}
\newcommand{\bd}{\partial\!}
\def\slack{\operatorname{\mathit{slack}}}
\def\EMPH#1{\textbf{\emph{\boldmath #1}}}
\begin{document}

\title{Near-Optimal Distance Emulator for Planar Graphs}

\author{Hsien-Chih Chang\thanks{University of Illinois at Urbana-Champaign, USA. Partially supported by NSF grant CCF-1408763.
    }
    \and
  Pawe\l{} Gawrychowski\thanks{University of Wrocław, Poland. 
  }
\and
 Shay Mozes\thanks{IDC Herzliya, Israel. Partially supported by  the Israel Science Foundation grants 794/13 and 592/17.
 }
\and
  Oren Weimann\thanks{University of Haifa, Israel. Partially supported by  the Israel Science Foundation grants 794/13 and 592/17.}
}

\date{}
\maketitle

\begin{abstract}
Given a graph $G$ and a set of terminals $T$, a \emph{distance emulator} of $G$ is another graph $H$ (not necessarily a subgraph of $G$) containing $T$, such that all the pairwise distances in $G$ between vertices of $T$ are preserved in $H$.
An important open question is to find the smallest possible distance emulator.

We prove that, given any subset of $k$ terminals in an $n$-vertex undirected unweighted planar graph, we can construct in $\tilde O(n)$ time a distance emulator of size $\tilde O(\min(k^2,\sqrt{k\cdot n}))$. This is optimal up to logarithmic factors. 
The existence of such distance emulator provides a straightforward framework to solve distance-related problems on planar graphs: Replace the input graph with the distance emulator, and apply whatever algorithm available to the resulting emulator. 
In particular, our result implies that, on any unweighted undirected planar graph, one can compute all-pairs shortest path distances among $k$ terminals in $\tilde O(n)$ time when $k=O(n^{1/3})$.   
\end{abstract}

\section{Introduction}

The planar graph metric is one of the most well-studied metrics in graph algorithms, optimizations, and computer science in general. The planar graph \emph{metric compression} problem is to compactly represent the distances among a subset $T$ of $k$ vertices, called \EMPH{terminals}, in an $n$-vertex planar graph $G$. 
Without compression (and when $G$ is unweighted) these distances can be na\"ively represented with $O(\min(k^2  \log n,n))$ bits by either explicitly storing the $k \times k$ distances 
or alternatively by storing the entire graph (na\"ively, this takes $O(n \log n)$ bits, but can be done with $O(n)$ bits \cite{Turan84,MunroR97,ChiangLL01,BF10}). 

A natural way to compress $G$ is to replace it by another graph $H$ that contains $T$ as vertices, and the distances between vertices in $T$ are preserved in $H$. In other words, $d_G(x,y) = d_H(x,y)$ holds for every pair of vertices $x$ and $y$ in $T$. Such graph $H$ is called a \EMPH{distance emulator} of $G$ with respect to $T$ (or a \EMPH{distance preserver} in the case where $H$ is required to be a subgraph of $G$). 
Distance emulators are appealing algorithmically, since we can readily feed them into our usual graph algorithms. They have been studied as compact representations of graphs \cite{bce-sdpas-05,ce-sspdp-06,ChlamtacDKL17,BodwinW16,Bodwin17,Alon02}, and used as fundamental building blocks in algorithms and data structures for distance-related problems \cite{bce-sdpas-05,dhz-apasp-00,ElkinP16,AbboudB16,AbboudB17,AbboudBP17,HuangP18}.  Similar concepts like additive and multiplicative \emph{spanners} \cite{dhz-apasp-00,g-sptmh-01,bkmp-ncspa-05,tz-sesde-06,w-lbase-06,bg-sprgm-08}
and \emph{distance labelings} \cite{gp-egsc-72,w-pscc-83,m-mug-65,knr-irg-92,gppr-dlg-04,adkp-sdl-16,gku-ssdlu-16,aghp-sfsld-16}
are popular topics with abundant results.

In planar graphs, an extensive study has been done for the case where the emulator $H$ is required to be a minor of $G$ \cite{g-sptmh-01,CXKR06,EEST08,bg-sprgm-08,EGK+14,CGH16,GR16,GHP17,KR17arxiv}. Restricting $H$ to be a minor of $G$ has proven useful when $H$ is only required to \emph{approximate} the distances between vertices in $T$, say, up to a $(1 + \varepsilon)$ multiplicative error. For exact distances however, Krauthgamer, Nguyen, and Zondiner \cite{KNZ14} have shown a lower bound of $\Omega(k^2)$ on the size of $H$, even when $G$ is an unweighted grid graph and   $H$ is a (possibly weighted) minor emulator.

In general, an emulator $H$ does not have to be a subgraph or a minor of $G$. In fact, $H$ can be non-planar even when $G$ is planar. Even in this setting, for \emph{weighted} planar graphs,  we cannot beat the na\"ive bound because one can encode an arbitrary $k \times k$ binary matrix using subset distances among $2k$ vertices in a weighted planar graph \cite{gppr-dlg-04,agmw-nocpg-17}. Since we cannot compress an arbitrary binary $k \times k$ matrix into less than $k^2$ bits, we again have an $\Omega(k^2)$ lower bound.

What about \emph{unweighted} planar graphs? Various distance-related problems in unweighted graphs enjoy algorithms and techniques \cite{dhz-apasp-00,kk-spqpg-03,bkmp-ncspa-05,tz-sesde-06,w-lbase-06,w-widpg-09,WYgirth10,c-cgpgl-13,w-ctdqp-13,ek-ltamf-13} which outperform their weighted counterparts.  Indeed, for the metric compression problem in unweighted planar graphs, Abboud, Gawrychowski, Mozes, and Weimann \cite{agmw-nocpg-17} have very recently shown that we can in fact beat the $O(k^2)$ bound.  They showed that the distances between any $k$ terminals can be encoded using $ \tilde O(\min(k^2,\sqrt{k\cdot n}))$ bits\footnote{The $\tilde{O}(\cdot)$ notation hides polylogarithmic factors in $n$.}. 
The encoding of Abboud \etal\ is optimal up to logarithmic factors. However, it is not an emulator! Goranci, Henzinger, and Peng \cite{GHP17} raised the question of whether such encoding can be achieved by an emulator. We answer this question in the affirmative.

\paragraph*{Our results.}
We show that the encoding of Abboud \etal\ can be turned into an emulator, with no asymptotic overhead. Namely, we prove that any unweighted planar graph has a near-optimal size distance emulator.

\begin{theorem}
\label{Th:emulator-subset}
Let $G$ be an $n$-vertex undirected unweighted planar graph, and let $T$ be the set of $k$ terminals. A directed weighted graph $H$ of size $O(\min(k^2,\sqrt{k\cdot n}\log^3n))$ can be constructed in $O(n\log^4 n)$ time as a distance emulator of $G$ with respect to $T$; all edge weights of $H$ are non-negative integers bounded by $n$. 
\end{theorem}

Our theorem provides a practical framework for solving distance-related problems on planar graphs:  Replace the input graph $G$ (or proper subgraphs of the right choice) with the corresponding distance emulator $H$, and invoke whatever algorithm available on $H$ rather than on $G$. 
One concrete example of this is the computation of all-pairs shortest paths among a subset $T$ of $k$ vertices.
The algorithm by Cabello \cite{c-mdpg-12} can compute these $\Theta(k^2)$ distances in $\tilde{O}(n^{4/3} + n^{2/3} k^{4/3})$ time.   
Mozes and Sommer \cite{ms-edopg-12} improved the running time to $\tilde{O}(n^{2/3} k^{4/3})$ 
when 
$k = \Omega(n^{1/4})$. 
Our emulator immediately implies a further improvement of the running time to $\tilde{O}(n+n^{1/2} k^{3/2})$ 
using an extremely simple algorithm: Replace $G$ by the emulator $H$ and run Dijkstra's single-source shortest-paths algorithm on $H$ from every vertex of $T$.
This improves on previous results by a polynomial factor for essentially the entire range of $k$.

\begin{corollary}
\label{cor}
Let $G$ be an undirected unweighted planar graph, and let $T$ be a subset of $k$ terminals in $G$.  One can compute the shortest path distances between all pairs of vertices in $T$ in $O(n\log^4 n + n^{1/2} k^{3/2} \log^3 n \log\log n)$ time.
\end{corollary}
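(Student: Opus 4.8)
The plan is to treat the emulator of Theorem~\ref{Th:emulator-subset} as a black box and then run a single-source shortest-path computation from each terminal. First I would apply Theorem~\ref{Th:emulator-subset} to $G$ and $T$, obtaining in $O(n\log^4 n)$ time a directed weighted graph $H\supseteq T$ with $O(\min(k^2,\sqrt{k\cdot n}\log^3 n))$ vertices and edges, with non-negative integer edge weights bounded by $n$, and with $d_H(x,y)=d_G(x,y)$ for all $x,y\in T$. Since the emulator reproduces exactly the pairwise terminal distances of $G$, computing all $\binom{k}{2}$ distances among $T$ in $G$ reduces to computing $d_H(s,\cdot)$ for each of the $k$ terminals $s$, i.e., to $k$ single-source shortest-path computations in $H$.

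Each such computation I would carry out with Dijkstra's algorithm, which is legitimate because $H$ is directed with non-negative weights. Because the weights are integers bounded by $n$, I would use an integer-key priority queue of the kind employed in fast implementations of Dijkstra for small weights, which runs in $O(m+n'\log\log n')$ time on a graph with $n'$ vertices and $m$ edges; here $n',m=O(\min(k^2,\sqrt{k\cdot n}\log^3 n))$, so one run costs $O(\min(k^2,\sqrt{k\cdot n}\log^3 n)\cdot\log\log n)$. Summing over the $k$ sources and using the unconditionally valid bound $\min(k^2,\sqrt{k\cdot n}\log^3 n)\le\sqrt{k\cdot n}\log^3 n$, this phase takes $O(n^{1/2}k^{3/2}\log^3 n\log\log n)$ time; adding the $O(n\log^4 n)$ construction time gives the claimed running time.

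I do not expect a real obstacle here: all of the substantive work lies in Theorem~\ref{Th:emulator-subset}, and what remains is bookkeeping. The two points to be slightly careful about are that $H$ is directed and need not be planar, so one cannot invoke linear-time undirected-SSSP machinery and must instead exploit the small integer weights to reduce the Dijkstra overhead to a $\log\log n$ factor (a plain Fibonacci-heap Dijkstra would cost only one extra logarithm, i.e.\ $O(n\log^4 n + n^{1/2}k^{3/2}\log^4 n)$), and that one should use the $\sqrt{k\cdot n}\log^3 n$ branch of the $\min$ when bounding $k\cdot|H|$, so that the product collapses to $n^{1/2}k^{3/2}\log^3 n$ irrespective of how $k$ compares to $n$.
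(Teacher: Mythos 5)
Your proposal is correct and is essentially identical to the paper's own argument: build the emulator of Theorem~\ref{Th:emulator-subset} in $O(n\log^4 n)$ time, then run Dijkstra from each of the $k$ terminals on the size-$O(\sqrt{k\cdot n}\log^3 n)$ emulator using an $O(\log\log n)$-time heap justified by the bounded integer edge weights. The paper makes the same two observations you flag (the emulator is a general directed graph, and the small integer weights are what buy the $\log\log n$ instead of a full $\log$), so there is nothing to add.
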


\paragraph*{Techniques.}
The main novel idea is a graphical encoding of unit-Monge matrices. 
Our emulator is obtained by plugging this graphical encoding into the 
encoding scheme of Abboud \etal~\cite{agmw-nocpg-17}. Their encoding consists of two parts. 
First, they explicitly store a set of distances between some pairs of vertices of $G$ (not necessarily of $T$). 
It is trivial to turn this part into an emulator by adding a single weighted edge between every such pair. 
Second, they implicitly store all-pairs distances in a set containing pairs of the form $(X,Y)$ where $X$ is a \emph{prefix} of a cyclic walk around a single face and $Y$ is a \emph{subset} of vertices of a cyclic walk around a (possibly different) single face. 
For each such pair $(X,Y)$, the $X$-to-$Y$ distances are efficiently encoded using only $O((|X|+|Y|) \cdot \log n)$ bits (rather than the na\"ive $O(|X|\cdot|Y| \cdot \log n)$ bits).  
This encoding follows from the fact that the $X\times Y$ distance matrix $M$ is a \emph{triangular unit-Monge} matrix.
\vspace{0.037in} 

\noindent A matrix is \EMPH{Monge} if for any $i,j$ we have that 
\[
M[i+1,j] - M[i,j] ~\le~ M[i+1,j+1] - M[i,j+1].
\] 
A Monge matrix is \EMPH{unit-Monge} if for any $i,j$ we also have that 
\[
M[i+1,j]-M[i,j] ~\in~ \{-1,0,1\}.
\]
A (unit-)Monge matrix is \EMPH{triangular} if the above conditions are only required to hold for $i\neq j$.  In other words, when all four entries $M[i,j], M[i+1,j], M[i,j+1], M[i+1,j+1]$ belong to the upper triangle of $M$ or to the lower triangle of $M$. 

The Monge property has been heavily utilized in numerous algorithms for distance related problems in planar graphs \cite{fr-pgnwe-06,c-sadsp-17,ms-edopg-12,insw-iamcm-11,TALG10SP,MSMS,GomoryHu,NussbaumOracle}  
in computational geometry \cite{kmns-smqmm-12,SubmatrixMonge,Kaplan2,Kaplan3,SMAWK,KK89}, 
and in pattern matching \cite{Schmidt1998,CrochemoreLandauZiv,Algorithmica13,TiskinSODA}. However, prior to the work of Abboud \etal\ \cite{agmw-nocpg-17}, the stronger \emph{unit-Monge} property has only been exploited in pattern matching \cite{TiskinSODA,Tiskin}. 

The encoding of the $X\times Y$ unit-Monge matrix $M$ is  immediate: For any $i$, the sequence of differences $M[i+1,j]-M[i,j]$ is nondecreasing and contains only values from $\{-1,0,1\}$, so they can be encoded by storing the positions of the first 0 and the first 1. Storing these positions for every $i$ takes $ O (|X| \cdot \log n)$ bits. To encode $M$, we additionally store $M[0,j]$ for every $j$ using $ O(|Y|\cdot \log n)$ bits. This encoding, however, is clearly not an emulator. Our main technical contribution is in showing that it can be turned into an emulator with no asymptotic overhead. Namely, in Section~\ref{S:monge-emulator} we prove the following lemma.

\begin{lemma}
\label{L:emulator}
Given an $n \times n$ unit-Monge or triangular unit-Monge matrix $M$, one can construct in $O(n \log n)$ time a directed edge-weighted graph (emulator) $H$ with $O(n \log n)$ vertices and edges such that the distance in $H$ between vertex $i$ and vertex $j$ is exactly $M[i,j]$.
\end{lemma}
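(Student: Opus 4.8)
\emph{Plan.} I would first peel off a trivial part of $M$ to reduce to a pure \emph{dominance-counting matrix}, and then realize that matrix as shortest paths in a weighted DAG obtained from a one-dimensional range tree with fractional cascading baked in. For each $l$ the row difference $\Delta_l(j):=M[l+1,j]-M[l,j]$ is, by the unit-Monge property, nondecreasing in $j$ with values in $\{-1,0,1\}$, hence $\Delta_l(j)=-1+[\,j>x_l\,]+[\,j>y_l\,]$ for two thresholds $x_l\le y_l$ — precisely the data stored by the encoding of Abboud \etal. Summing over $l<i$,
\[
M[i,j]=M[0,j]-i+D_S[i,j],\qquad D_S[i,j]:=\bigl|\{(a,b)\in S:\ a<i,\ b<j\}\bigr|,
\]
where $S$ is the multiset $\{(l,x_l),(l,y_l):0\le l<n\}$, of size $N\le 2n$. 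The term $M[0,j]-i$ is \emph{separable}, so any emulator for $D_S$ becomes one for $M$ by prepending to each source an edge whose weight depends only on $i$ and appending to each sink an edge whose weight depends only on $j$; a single global shift (and, if needed, splitting the index range) makes all weights nonnegative integers. A \emph{triangular} unit-Monge matrix I would handle by running the same scheme on each of its two triangles — on each triangle the derivative $\Delta_l$ is again a $\{-1,0,1\}$ staircase — and combining the two emulators on shared source/sink vertices, with routine care at the diagonal so that neither emulator ever reports a distance shorter than $M[i,j]$ on the other triangle.

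\emph{Emulator for $D_S$.} Build a balanced binary tree $\mathcal C$ over the column indices $\{0,1,\dots,n\}$; the prefix $\{b<j\}$ decomposes into the $O(\log n)$ canonical nodes that are left children of the nodes at which the root-to-$j$ path turns right, and $D_S[i,j]=\sum_{\mu\text{ canonical}}g_\mu(i)$ where $g_\mu(i):=|\{(a,b)\in S:\ b\in\mathrm{cols}(\mu),\ a<i\}|$. The emulator has a vertex $(\nu,s)$ for every node $\nu$ of $\mathcal C$ and every slot $s$ in the sorted list, with multiplicity, of the rows of the points of $S$ lying in $\nu$'s subtree; an edge runs from $(\nu,s)$ to the matching slot in each child of $\nu$ (one step of fractional cascading, with $O(1)$ outgoing choices), and the edge descending to the \emph{right} child carries weight $g_{\text{left child}}(s)$ — which is exactly the slot that $i$ would occupy in the left child, a precomputed number. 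The source $u_i$ enters at $(\mathrm{root},s_i)$, $s_i$ being the rank of $i$ in the root's row list, and $v_j$ is the leaf $j$ of $\mathcal C$. Since each point lies in $O(\log n)$ subtrees, $\sum_\nu|\mathrm{rowlist}(\nu)|=O(n\log n)$, so the DAG has $O(n\log n)$ vertices and edges and is built in $O(n\log n)$ time (one sort, then the cascade pointers and weights).

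\emph{Correctness, and the obstacle.} The DAG flows from the root of $\mathcal C$ toward its leaves, so from $u_i$ to $v_j$ there is a unique source-to-sink path: it follows the root-to-$j$ path of $\mathcal C$, the carried slot equals the rank of $i$ in the current node's row list throughout (that is what the cascade pointers guarantee), and the weights collected at the right-turns sum to $\sum_\mu g_\mu(i)=D_S[i,j]$ — so the path has length exactly $D_S[i,j]$ and, being the only one, is shortest. The real obstacle is not correctness but size. The naive divide-and-conquer that recurses on the matrix while relabeling \emph{both} the row- and the column-index has to carry a two-dimensional product state and already blows up to about $n^2$; putting one coordinate into a tree (resolved by the route itself) and searching the other coordinate \emph{along} the shortest path avoids that, but a plain range tree still costs an $O(\log^2 n)$ factor, and it is the fractional cascading folded into the DAG that brings this down to $O(\log n)$. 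The leftover points — the $O(1)$-size behaviour at the leaves of $\mathcal C$, the diagonal patching in the triangular case, and the nonnegativity shift — are routine.
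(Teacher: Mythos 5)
Your construction for a \emph{full} unit-Monge matrix is correct, and it takes a genuinely different route from the paper. The paper first converts $M$ into a right substochastic $O(n)\times O(n)$ matrix $P$ (its Lemma on substochastic decompositions, which is the same reduction you obtain via the two thresholds $x_l\le y_l$ per row), and then builds the dominance-counting emulator by halving $P$ \emph{by rows}, using right-substochasticity to discard all-zero columns so that each half has $O(\text{\#rows})$ surviving columns; row vertices sit at the leaves of this recursion and paths travel from a row leaf up through column vertices, collecting the sibling half's contribution as an edge weight. You instead build a balanced tree over \emph{columns} with merged row lists and fractional cascading, and let the $u_i$-to-$v_j$ path descend to leaf $j$ while carrying the rank of $i$; the weight collected at each right turn is the left sibling's rank, i.e.\ exactly the canonical-node count. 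Both give $O(n\log n)$ size and time; your version does not need substochasticity at all (it works for any $O(n)$-point dominance matrix), which is a mild gain in generality. Two small repairs: a sink $v_j$ must be a single vertex receiving $0$-weight edges from all slots of leaf $j$ (an $O(n)$ addition), and ``a single global shift'' does not preserve distances --- you need the paper's price-function reweighting (valid here since your graph is a DAG with a unique path per source--sink pair).

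The \emph{triangular} case, however, is a genuine gap, and it is not ``routine care at the diagonal.'' After extending $M_l$ to a full unit-Monge matrix, the filled-in values on the wrong triangle can undercut the true entries badly (e.g.\ for $M[i,j]=|i-j|$ the natural extension gives $M_l[i,j]=i-j<0$ for $i<j$), so the $M_l$-emulator must not supply \emph{any} $u_i\to v_j$ path with $i<j$. In your DAG this cannot be done by deleting edges: the paths from $u_{j-1}$ and $u_{j+1}$ to $v_j$ coincide from the root onward whenever no point of $S$ has row $j-1$ or $j$ (their ranks agree in every node's list), so you cannot forbid one without forbidding the other. The problem is therefore global to the shared structure, not local to the diagonal. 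The obvious workaround --- tiling the triangle with fully unit-Monge blocks along the diagonal --- costs an extra $\log n$ factor, which is precisely the overhead the lemma is claimed (and needed, for the $\sqrt{kn}\log^3 n$ bound) to avoid. The paper spends its entire non-bipartite subsection on this: it classifies entries as artificial/original and columns into three categories, and modifies the recursion so that $r_i$ reaches $c_j$ if and only if $(i,j)$ lies in the intended triangle. You would need an analogous surgery on the range-tree DAG (e.g.\ per-triangle entry points that never merge sources across the diagonal) before the proof is complete.
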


In Section~\ref{S:construction} we describe how to plug the emulator of  Lemma~\ref{L:emulator} into the encoding scheme of Abboud \etal \cite{agmw-nocpg-17}. 
In their paper, the size of the encoding was the only concern and the construction time was not analyzed. In our case, however, the construction time is crucial for the application in Corollary~\ref{cor}.  We achieve an $O(n \log^4 n)$ time construction by making a small modification to their encoding, based on the technique of heavy path decomposition. 

Another difference is that in their construction once the encoding is computed, single-source shortest-paths can be computed on the encoding itself using an algorithm by Fakcharoenphol and Rao~\cite{fr-pgnwe-06}. More precisely (see~\cite[Section 5]{agmw-nocpg-17}), using a compressed representation of unit-Monge matrices, the total size of their encoding is $s = O(\sqrt{k\cdot n} \log^2 n)$ words, running the Fakcharoenphol and Rao algorithm requires accessing $O(s \log^2 s)$ elements of the unit-Monge matrices, and accessing each such element from the compressed representation requires $O(\log n/\log\log n)$ time (via a range dominance query, see Section~\ref{S:monge-emulator})\footnote{The $O(\log n/\log\log n)$ factor was overlooked in~\cite{agmw-nocpg-17}. It can be obtained by representing the unit-Monge matrix with a 
range dominance data structure, such as the one in~\cite{JaJaMS04}.}. Overall, using our $O(n\log^4 n)$ construction together with the above single-source shortest-paths computations from each terminal would result in a time bound of $O(n\log^4 n + n^{1/2} k^{3/2} \log^5 n/\log\log n)$ in Corollary~\ref{cor}.
Our approach on the other hand, runs Dijkstra's classical algorithm on the emulator $H$, leading to a faster (by a $(\log n/ \log\log n)^2$ factor) time bound for Corollary~\ref{cor}. More precisely, our emulator $H$  is of size $O(\sqrt{k\cdot n}\log^3 n)$, the edge-weights in $H$ are encoded explicitly (i.e. there is no need for a range dominance query), and the edge-weights are integers in $[1 \,..\, n]$ so Dijkstra's algorithm can use an $O(\log\log n)$ time heap (using such fast heap is not possible for Fakcharoenphol and Rao's algorithm). Running Dijkstra therefore takes $O(\sqrt{k\cdot n}\log^3 n \log\log n)$ time leading to the bound in Corollary~\ref{cor}.

\section{Distance Emulators for Unit-Monge Matrices}
\label{S:monge-emulator}

\subsection{The bipartite case}
We next describe an $O(n\log n)$ space and time construction of a distance emulator for \emph{square} $n\times n$ unit-Monge matrices.  
The construction can be trivially extended to \emph{rectangular} $n\times m$ unit-Monge matrices (in $O(\max(n,m) \log (\max(n,m)))$ time and space) by duplicating the last row or column until the matrix becomes square.

We begin by establishing a relation between unit-Monge matrices and right substochastic binary matrices.
A binary matrix $P$ is \EMPH{right substochastic} if every row of $P$ contains at most one nonzero entry.  
The following lemma can be established similarly to Abboud \etal~{\cite[Lemma~6.1]{agmw-nocpg-17}}.%
\footnote{In Abboud \etal\ \cite[Section 6]{agmw-nocpg-17} a unit-Monge matrix is defined to be one where every two adjacent elements differ by at most 1, whereas here we only assume this for elements that are adjacent vertically (that is, in the same column and between adjacent rows).} 

\begin{lemma}[Abboud \etal~{\cite[Lemma~6.1]{agmw-nocpg-17}}]
\label{L:substochastic}
For any $n \times m$ unit-Monge matrix $M$ there is a right substochastic $2(n-1) \times (m-1)$ binary matrix $P$ such that for all $x$ and $y$,
\[
M[x,y] ~=~ U[x] + V[y] + \sum_{\substack{i\geq 2x-1 \\ j\geq y}} P[i,j]
\]
where $U$ is a vector of length $n$ and $V$ is a vector of length $m$. 
\end{lemma}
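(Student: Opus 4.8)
The plan is to reduce the claim to the Monge property of $M$ via a discrete ``second difference'' argument. For a Monge matrix $M$ define, for $x \in [1\,..\,n-1]$ and $y \in [1\,..\,m-1]$, the density
\[
D[x,y] ~:=~ \bigl(M[x,y] - M[x+1,y]\bigr) - \bigl(M[x,y+1] - M[x+1,y+1]\bigr).
\]
The Monge inequality says exactly that $D[x,y] \ge 0$, and the unit-Monge assumption on vertical differences says $M[x,y]-M[x+1,y] \in \{-1,0,1\}$, so each entry of $D$ lies in $\{-2,\ldots,2\}$ and the two ``halves'' $M[x,\cdot]-M[x+1,\cdot]$ each take values in $\{-1,0,1\}$. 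The first step is to split $D[x,y]$ into two nonnegative $\{0,1\}$-parts: since the difference of two $\{-1,0,1\}$ values can be as large as $2$, I would write $D[x,y] = D^{+}[x,y] + D^{-}[x,y]$ where $D^{+}$ records the ``$-1 \to +1$ or $0\to1$'' type increments and $D^{-}$ records the ``$-1\to0$'' type increments — concretely, using indicator functions of the form $[M[x,y+1]-M[x+1,y+1] \ge 1] - [M[x,y]-M[x+1,y]\ge 1]$ and the analogous expression with threshold $\ge 0$. Each of $D^{+}, D^{-}$ is a $\{0,1\}$-matrix, and this is what will ultimately give us the doubling $2(n-1)$ in the row count: row $x$ of $M$ contributes rows $2x-1$ and $2x$ of $P$, one carrying $D^{-}[x,\cdot]$ and one carrying $D^{+}[x,\cdot]$.

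The second step is to recover $M$ from the density by ``integrating'' twice. Summing $D[x,y]$ over $i \ge x$, $j \ge y$ telescopes the column index (the $M[\cdot,y]-M[\cdot,y+1]$ differences collapse), and then summing over the row index telescopes again, so that
\[
\sum_{i \ge x,\ j\ge y} D[i,y'] \quad\text{and its partial sums reconstruct } M[x,y]
\]
up to boundary terms that depend only on the last row/column of $M$. Those boundary terms are precisely what gets absorbed into the vectors $U$ and $V$: after the double summation one is left with an expression of the form $M[x,y] = (\text{something depending only on }x) + (\text{something depending only on }y) + \sum_{i \ge x, j \ge y} D[i,j]$, and I would set $U[x]$ and $V[y]$ to be those two marginal quantities (e.g. $V[y] = M[n,y]$ and $U[x] = M[x,m] - M[n,m]$, or the analogous choice that makes the bookkeeping cleanest). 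The only remaining point is the index shift $i \ge 2x-1$: since we have interleaved two rows of $P$ per row of $M$, the tail sum $\sum_{i\ge 2x-1, j\ge y} P[i,j]$ over the doubled matrix equals $\sum_{i \ge x, j\ge y} D[i,j]$ by construction of $D^{+}, D^{-}$ as the two interleaved layers, so the stated formula follows. Checking that $P$ is right substochastic — at most one $1$ per row — reduces to observing that in row $x$ of $M$ the quantity $M[x,y]-M[x+1,y]$ is nondecreasing in $y$ (this is the Monge property again, read columnwise) and takes values in $\{-1,0,1\}$, so it crosses each threshold $\ge 0$ and $\ge 1$ at most once; hence $D^{-}[x,\cdot]$ has at most one $1$ and $D^{+}[x,\cdot]$ has at most one $1$, which is exactly one nonzero per row of $P$.

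The main obstacle is the bookkeeping around the factor-of-two blow-up: one has to be careful that the decomposition $D = D^{+} + D^{-}$ is genuinely into $\{0,1\}$-matrices and that, after interleaving them as consecutive rows $2x-1, 2x$, the single tail sum $\sum_{i \ge 2x-1} \sum_{j\ge y} P[i,j]$ reproduces exactly $\sum_{i\ge x}\sum_{j\ge y} D[i,j]$ with no off-by-one error — in particular that row $2x$ of $M$'s ``second half'' is not double counted. A clean way to avoid sign confusion is to first reduce to the case where all vertical differences lie in $\{0,1\}$ by adding a suitable linear function of the row index to $M$ (which changes $U$ but not the Monge structure), handle that case where $D$ is already a single $\{0,1\}$-matrix and no doubling is needed, and then note that the general unit-Monge case is the ``sum of two'' such matrices. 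Since the excerpt explicitly says this lemma ``can be established similarly to'' \cite[Lemma~6.1]{agmw-nocpg-17}, and the only genuine difference is the weaker hypothesis (unit differences required only vertically rather than in both directions, cf.\ the footnote), I expect the proof to amount to re-running their argument while tracking that this weakening still permits the $\{-1,0,1\}$ columnwise decomposition above. Once $P$, $U$, $V$ are in hand the verification of the displayed identity is a routine double telescoping.
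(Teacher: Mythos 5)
Your proposal is correct and follows essentially the same route as the paper: your density $D$ is exactly the paper's second-difference matrix $P'[i,j]=M[i+1,j+1]+M[i,j]-M[i,j+1]-M[i+1,j]$, your boundary terms $U[x]=M[x,m]-M[n,m]$ and $V[y]=M[n,y]$ match the paper's, and your threshold-indicator split $D=D^{+}+D^{-}$ is a concrete instantiation of the paper's rule for distributing the (at most two) ones in each row of $P'$ between the two rows of its $2\times 1$ block. The only blemishes are sign slips --- since $f(y)=M[x,y]-M[x+1,y]$ is non\emph{increasing} in $y$ by Monge, the $\{0,1\}$-valued increments are $[f(y)\ge 1]-[f(y+1)\ge 1]$ and $[f(y)\ge 0]-[f(y+1)\ge 0]$ rather than the reversed differences you wrote --- but these do not affect the substance of the argument.
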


\begin{proof}
We first define an $(n-1)\times (m-1)$ matrix $P'$ as follows:
\[
P'[i,j] = M[i+1,j+1] + M[i,j] - M[i,j+1] - M[i+1,j].
\]
By Monge, clearly $P'[i,j] \geq 0$, and by unit $P'[i,j]\leq 2$. In fact, unit also implies that
the sum in every row of $P'$ is at most 2. To see this, consider
$P'[i,1]+P'[i,2]+\ldots+P'[i,n]$. After telescoping, this is $P'[i,1]-P[i+1,1]+P[i+1,m]-P[i,m]$,
so by unit at most 2 as claimed. Then, consider $\sum_{i' \geq i,j'\geq j} P'[i',j']$. After substituting
the definition of $P'$ and telescoping, this becomes $M[i,j]+M[n,m]-M[i,m]-M[n,j]$.
Hence, if we define $U[i]=M[i,m]-M[n,m]$ and $V[j]=M[n,j]$ it holds that
$M[i,j] = U[i]+V[j] + \sum_{i' \geq i, j' \geq j} P'[i',j']$. Finally, we create a $2(n-1) \times (n-1)$
matrix $P$, where every $2\times 1$ block corresponds to a single $P'[i,j]$, that is,
the sum of values in the block is equal to $P'[i,j]$. It is always possible to define
$P$ so that it is a permutation matrix. To see this, consider a row of $P'$. The values
there sum up to at most 2, say $P'[i,j]=P'[i,j']=1$ for some $j<j'$. Then, $P'[i,j]$ should
correspond to a 1 in the first row of its block and $P'[i,j']$ to a 1 in the second row of
its block. If $j=j'$ then in the corresponding block we create two 1s, one per row.
\end{proof}

Our goal is to construct a small emulator (in terms of number of vertices and edges) for $M$.
By Lemma~\ref{L:substochastic}, it suffices to construct a graph $H$ satisfying 
\[
d_H[r[x],c[y]] ~=~ \sum_{\substack{i\geq x \\ j\geq y}}P[i,j],
\]
where $x$ and $y$ range over the rows and columns of $P$, $r[x]$ is a vertex corresponding to row $x$ of $P$, $c[y]$ is a vertex corresponding to column $y$ of $P$, and the $r[x]$-to-$c[y]$ distance in $H$ equals the number of 1s in $P$ dominated by entry $P[x,y]$ (a 2-dimensional range dominance counting).
We assume that $P$ is given as input, which is represented as a vector specifying, for each row $i$ of $P$, the index of the (at most) single one entry in that row.
We refer to such a graph $H$ as an \EMPH{emulator} for the right substochastic matrix $P$.
To convert an emulator for $P$ into an emulator for $M$, add a new vertex $r_0[x]$ for each $1\leq x < n$, and connect it with an edge of weight $U[x]$ to $r[2x-1]$.  Similarly, for each $1\leq y < m$, connect $c[y]$ to a new vertex $c_0[y]$ with an edge of weight $V[y]$. 
In addition, add a new vertex $c_0[m]$ and connect each $r_0[x]$ to $c_0[m]$ by an edge of weight $U[x]+V[m]$; and add a new vertex $r_0[n]$ and connect $r_0[n]$ to each $c_0[y]$ by an edge of weight $U[n]+V[y]$.
By Lemma~\ref{L:substochastic}, the $r_0[x]$-to-$c_0[y]$ distance equals $M[x,y]$.
We therefore focus for the remainder of this section on constructing an emulator for the right substochastic matrix $P$.

\begin{figure}[t]
\centering
\includegraphics[width=\linewidth]{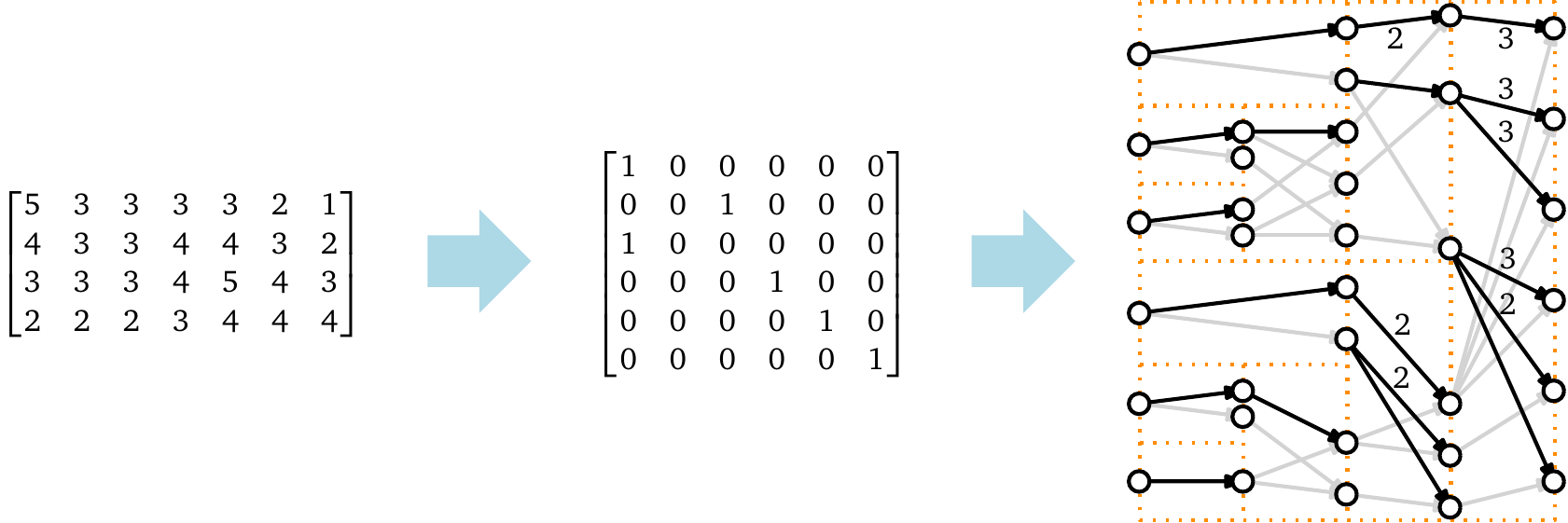}
\caption{Unit-Monge distance matrix $M$, its corresponding row substochastic matrix $P$, and the emulator~$H$ for the matrix $P$.  
In this example the two vectors given by Lemma~\ref{L:substochastic} are $U = [{-3} \>\> {-2} \>\> {-1} \>\> 0]$ and $V = [2 \>\> 2 \>\> 2 \>\> 3 \>\> 4 \>\> 4 \>\> 4]$.
The leftmost vertices in $H$ are row vertices $r[x]$, and the rightmost vertices in $H$ are column vertices $c[y]$.  
All gray directed edges without labels have weight $0$, and all black directed edges without labels have weight $1$.  
For instance, the distance from $r[2]$ to $c[1]$ is equal to the number of $1$s in $P$ dominated by the $(2,1)$-entry, which is $5$.
}
\label{F:emulator}
\end{figure}

\paragraph{Recursive construction.}
We now describe how to construct an emulator for an $n \times \alpha n$ right substochastic matrix $P$ recursively for arbitrary constant $\alpha \ge 0$.

When $P$ has only a single row, 
we create a row vertex $r$ for the single row and a column vertex $c[y]$ for each column $y$.  Connect $r$ to $c[y]$ with a directed edge of weight $\smash\sum_{j \ge y} P[1,j]$.  

Otherwise, assume $P$ has more than one single row.
Divide $P$ into the top $\floor{n/2} \times \alpha n$ submatrix \EMPH{$P_\uparrow$} and the bottom $\ceil{n/2} \times \alpha n$ submatrix \EMPH{$P_\downarrow$}.
Since $P_\uparrow$ is right substochastic, $P_\uparrow$ has at most $\floor{n/2}$ columns that are not entirely zero; similarly $P_\downarrow$ has at most $\ceil{n/2}$ non-zero columns.
Let \EMPH{$P'$} and \EMPH{$P''$} be the submatrices of $P_\uparrow$ and $P_\downarrow$ respectively, induced by their non-zero columns as well as their last column (if it's a zero vector); denote the number of columns of $P'$ and $P''$ as $n'$ and $n''$, respectively.
Observe that both $P'$ and $P''$ are still right substochastic matrices.
Let $y'_1,\dots,y'_{n'}$ be the indices in $P_\uparrow$ (and thus in $P$) corresponding to the columns of $P'$, and let $y''_1,\dots,y''_{n''}$ be the indices in $P_\downarrow$ corresponding to the columns of $P''$.  
Observe that for any row $x$, the sum 
\[
\sum_{\substack{i\geq x \\ j\geq y}} P_\uparrow[i,j]
\]
is the same for all $y \in (y'_{\ell-1} \,..\, y'_{\ell}]$ for any fixed $\ell$.%
\footnote{For simplicity, one assumes $y'_0 = y''_0 = -\infty$.}  
This is because all the columns of $P_\uparrow$ in the range $(y'_{\ell-1} \,..\, y'_{\ell} - 1]$ are all zero.  A similar observation holds for the matrix $P_\downarrow$.  For each $y$, we denote by \EMPH{$y'_\to$} the (unique) smallest index $y'_\ell \in \{y'_1,\dots,y'_{n'}\}$ no smaller than $y$, and denote by \EMPH{$y''_\to$} the smallest index $y''_\ell \in \{y''_1,\dots,y''_{n''}\}$ no smaller than $y$. 

\begin{figure}[t]
\centering
\includegraphics[width=.7\linewidth]{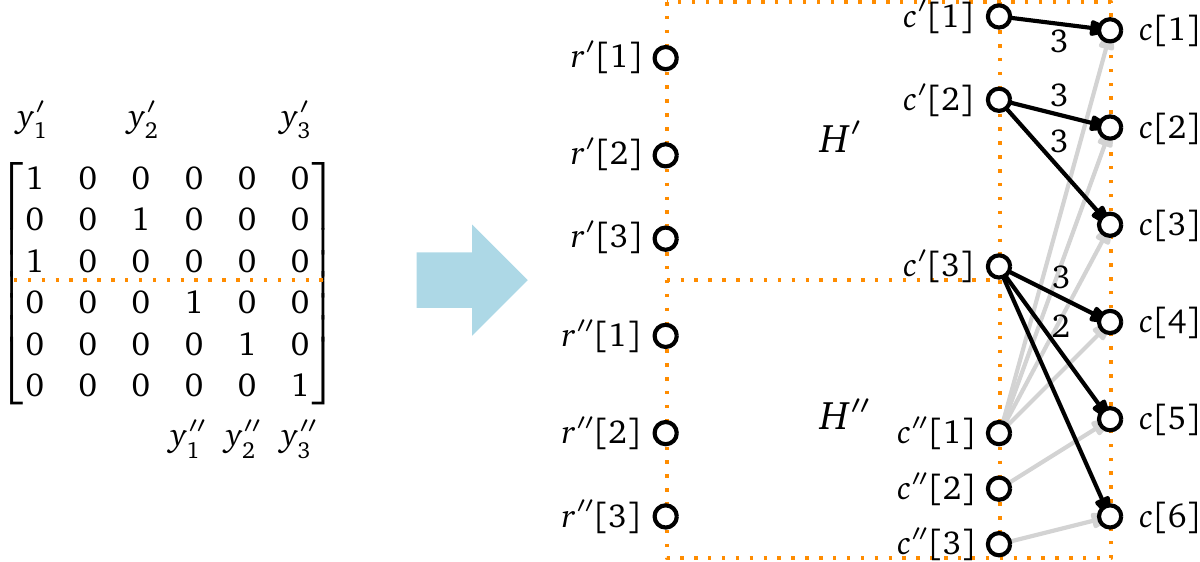}
\caption{The top-level construction of the emulator.  As an example, the edge from $c'[2]$ to $c[2]$ exists because $y'_\to = 2$ when $y=2$, and the weight on the edge is equal to $\sum_{{i > 3,j\geq 2}} P[i,j] = 3$.}
\label{F:recursive}
\end{figure}

We recursively compute emulators $H'$ and $H''$ for $P'$ and $P''$ respectively.  
Denote the row and column vertices of $H'$ as $r'[x]$ and $c'[y]$ where $x$ and $y$ range over the rows and columns of $P'$, respectively; similarly denote the row and column vertices of $H''$ as $r''[x]$ and $c''[y]$ respectively.
Take the row vertices of $H'$ and $H''$ as the row vertices of $H$, respecting the indices of $P$.
For each column $y$ of $P$ we add a column vertex $c[y]$ to $H$;
connect vertex $c''[y''_\to]$ of $H''$ to $c[y]$ with an edge of weight zero; and
connect vertex $c'[y'_\to]$ of $H'$ to $c[y]$ with an edge of weight 
\[
\sum_{\substack{i > \floor{n/2} \\ j\geq y}} P[i,j]. 
\]
Thus, since for each pair of indices $(x,y)$ of $P$ one has
\[
d_{H'}[r[x],c'[y]] ~=~ \sum_{\substack{x \le i \le \floor{n/2} \\ j\geq y}} P_\uparrow[i,j] \quad \text{and} \quad d_{H''}[r[x],c''[y]] ~=~ \sum_{\substack{i\geq x \\ j\geq y}} P_\downarrow[i,j],
\]
we have that 
\[
d_H[r[x],c[y]] ~=~ \sum_{\substack{i\geq x \\ j\geq y}} P[i,j]
\]
as desired. 
This completes the description of the construction of the emulator for square right substochastic matrices. Observe that the obtained emulator is a directed acyclic graph (dag) with a single path from each row vertex to each column vertex.  

\paragraph{Size analysis.}
We next analyze the size of the emulator.  Row vertices are created only at the base of the recursion and there are $n$ of them in total.  The number of edges and non-row vertices $N(n)$ satisfies the recurrence $N(n) \leq 2N(n/2) + O(\alpha n)$, so there are overall $O(n \log n)$ vertices and edges in the emulator when the number of columns is linear to the number of rows $n$.

\paragraph{Time analysis.}
As for the construction time, the only operation in the recursive procedure that does not clearly require constant time is the computation of the edge weights. However, it is not hard to see that computing all the edge weights of the form 
\[
w_y ~\coloneqq~ \sum_{\substack{i > \floor{n/2} \\ j\geq y}} P[i,j]
\]
in a single recursive call can be done in linear time.  This is because $w_y$ is precisely the weight of the unique path in the emulator $H''$ from the first row vertex to the $y$'th column vertex.  
We can therefore simply maintain (with no asymptotic overhead), for every recursively built emulator, the distances from its first row vertex to all its column vertices.

\paragraph{Making weights non-negative.}
The weight of every edge in our emulator for right substochastic matrix $P$ is non-negative, but the vectors $U$ and $V$ might contain negative entries; as a result the obtained emulator for $M$ might contain negative edges even though every entry in $M$ is non-negative.  This can be fixed by a standard reweighting strategy using a \emph{price function}. 
For the sake of computing the price function $\phi(\cdot)$, add a super-source $s$ and connect $s$ to each row vertex $r[x]$ of the emulator of $M$ by an edge of weight zero.  Compute the single-source shortest path tree rooted at $s$; since our emulator for $P$ is a dag, this can be done in time linear in the size of the emulator.  Take the shortest path distances $d(s,\cdot)$ as the price function $\phi(\cdot)$, and let the \emph{reduced weight} $w_\phi(uv)$ of edge $uv$ be $w(uv) + \phi(u) - \phi(v)$.  Finally increase the weight of every incoming edge to $c[y]$ by the amount $\phi(y)$. 
Now all modified weights are non-negative, and all shortest path distances from $r[x]$ to $c[y]$ are preserved because $\phi(r[x]) = 0$ holds for all row vertices $r[x]$.  

\subsection{The non-bipartite case}
Recall that the matrices that capture pairwise distances among consecutive vertices on the boundary of a single face in a planar graph are not (unit-)Monge. In Abboud \etal\ \cite{agmw-nocpg-17} this is handled by decomposing such a matrix recursively into (unit-)Monge submatrices, which incurs a logarithmic overhead. We show how to avoid this logarithmic overhead by constructing emulators for triangular (unit-)Monge matrices. 
Let $v_1,v_2, \dots, v_n$ be vertices on the boundary of a single face. Let $M$ be an $n \times n$ matrix such that $M[i,j]$ is the distance from $v_i$ to $v_j$. For any quadruple $(i,i',j,j')$ satisfying $i < i' < j < j'$, the shortest $v_i$-to-$v_j$ path must intersect the shortest $v_{i'}$-to-$v_{j'}$ path.  Therefore, $M[i,j] + M[i',j'] \geq M[i,j'] + M[i',j]$.  Unfortunately, this does not necessarily hold for an arbitrary quadruple. 
We can, however, define two $n \times n$ unit-Monge matrices $M_l$ and $M_u$, such that 
\begin{align*}
M_l[i,j] = M[i,j] & \qquad \text{for all $n \geq i>j \geq 1$, and} \\
M_u[i,j] = M[i,j] & \qquad \text{for all $1\leq i<j\leq n$.}
\end{align*}
The other elements of $M_l$ and $M_u$, as shown by the following lemma, can be (implicitly) filled in so that both matrices are unit-Monge. 

\begin{lemma}
	Let $M$ be a triangular unit Monge matrix. The undefined values of $M$ can be implicitly filled so that M is a $n \times n$ unit Monge matrix. 
\end{lemma}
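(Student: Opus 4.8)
The plan is to recast the (full) unit-Monge property entirely in terms of the vertical difference sequences of the matrix, observe that the triangular hypothesis constrains those sequences only on a one-sided range of columns, complete them in the cheapest way, and then rebuild the matrix by integrating the differences from a row that is already fully prescribed, so that agreement with the given entries is automatic.

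The starting observation is the equivalence obtained by unpacking the definitions of the Monge and unit conditions given above: an $n \times n$ matrix $\bar M$ is unit-Monge exactly when, for every $i \in \{1,\dots,n-1\}$, the sequence $\delta_i(j) := \bar M[i+1,j] - \bar M[i,j]$ is nondecreasing in $j$ and takes values only in $\{-1,0,1\}$ --- monotonicity of $\delta_i$ is literally the Monge inequality at row $i$, and the three-value restriction is literally the unit condition. So the task reduces to choosing, for every $i$, a full nondecreasing $\{-1,0,1\}$-valued sequence $\delta_i$ together with one base row, consistently with the prescribed entries of $M$. I would treat the case where the prescribed entries are those strictly below the diagonal, $M[i,j]$ for $i > j$ (the strictly-above-diagonal case is its mirror image --- exchange ``prefix'' with ``suffix'' and the last row with the first row --- or, equivalently, apply the construction to the matrix obtained by reversing both index orders, $(i,j) \mapsto (n+1-i,\,n+1-j)$, which is again triangular unit-Monge and interchanges the two triangles). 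For $i > j$ the difference $\delta_i(j) = M[i+1,j] - M[i,j]$ is formed from two prescribed entries (both row indices exceed $j$), and the unit hypothesis puts it in $\{-1,0,1\}$; and for $j \le i-2$ the Monge inequality at $(i,j)$ --- whose four entries are then all prescribed --- gives $\delta_i(j) \le \delta_i(j+1)$. Hence $\delta_i$, restricted to the prefix $\{1,\dots,i-1\}$ of columns, is a well-defined nondecreasing $\{-1,0,1\}$-valued sequence. I would then extend each such prefix to all of $\{1,\dots,n\}$ by repeating its last value (taking $\delta_1 \equiv 0$ when the prefix is empty), which keeps it nondecreasing and $\{-1,0,1\}$-valued; take the last row --- prescribed except for the corner $\bar M[n,n]$, which I set arbitrarily --- as the base; and define every remaining entry by $\bar M[i,j] := \bar M[n,j] - \sum_{i'=i}^{n-1} \delta_{i'}(j)$.

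Two verifications then remain. First, $\bar M$ is unit-Monge: by construction the $\delta_i$ are exactly its vertical difference sequences, and they were chosen nondecreasing and $\{-1,0,1\}$-valued. Second, $\bar M$ agrees with $M$ below the diagonal: for $i > j$ one has $j \le n-1$, so the base entry $\bar M[n,j]$ is prescribed, and every $\delta_{i'}(j)$ appearing in the sum has $i' \ge i > j$ and is therefore a genuine prescribed difference $M[i'+1,j] - M[i',j]$ of $M$; the sum telescopes and returns exactly $M[i,j]$. Finally, each extended $\delta_i$ is piecewise constant with $O(1)$ breakpoints and the base row is stored explicitly, so the completion is described in $O(n)$ space and any filled-in entry is computable on demand --- this is the sense in which the filling is ``implicit''.

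I do not expect a genuine obstacle here; the lemma is short once one adopts the difference-sequence viewpoint. The point that requires the most care --- the one I would flag as the crux --- is the second verification: ensuring that a freely completed family of difference sequences, integrated from a fixed base row, reproduces the prescribed triangular entries rather than merely assembling into some unrelated unit-Monge matrix. This is exactly why the base row is taken to be a fully prescribed row: when evaluating a below-diagonal entry $(i,j)$ through the reconstruction formula, every quantity consulted --- the base entry $\bar M[n,j]$ (prescribed, since $j \le n-1$) and the differences $\delta_{i'}(j)$ with $i' \ge i > j$ (each an original prescribed difference of $M$, since $j < i'$) --- belongs to the given data, never to the values we invented, so the telescoping identity holds verbatim. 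The remaining ingredients --- the reformulation, the one-sided extension, and the check of the unit-Monge property --- are routine.
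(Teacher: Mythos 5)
Your proof is correct and follows essentially the same route as the paper: both complete the matrix by propagating the last prescribed vertical difference $M[i+1,i]-M[i,i]$ (resp.\ $M[i+1,i-1]-M[i,i-1]$ under your strict convention) as a constant across the undefined triangle, so that each difference sequence $\delta_i$ remains nondecreasing and $\{-1,0,1\}$-valued, and then recover the entries by telescoping from an anchor that lies in the prescribed region. The only cosmetic differences are that the paper anchors the reconstruction at the diagonal entry $M[j,j]$ rather than at the last row, and phrases the choice as forcing the associated substochastic matrix $P$ to vanish on the completed region.
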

\begin{proof}
We prove the lemma for a lower triangular matrix $M$ ($n \geq i \geq j \geq 1$). The proof for an upper triangular matrix is similar. For an undefined element $M[i,j]$ (where $j > i$) we define $M[i,j] = M[j,j] - \sum_{k=i}^{j-1} (M[k+1,k] - M[k,k])$. This choice guarantees that the matrix $P$ for the completed $M$ is zero wherever an undefined element of $M$ is involved. 
Then, for $j=i+1$, 
$$
M[i+1,j] - M[i,j]  = M[i+1,i+1] - M[i+1,i+1] + M[i+1,i] - M[i,i] = 
M[i+1,i] - M[i,i],
$$
whereas for $j>i+1$ it also holds that, 
$$
M[i+1,j] - M[i,j]  = M[j,j] - \sum_{k=i+1}^{j-1} M[k+1,k] - M[k,k] - M[j,j] +  \sum_{k=i}^{j-1} M[k+1,k] - M[k,k]  = $$
$$
= M[i+1,i] - M[i,i].
$$

It follows that for all $j>i$, $M[i+1,j]-M[i,j] = M[i+1,j+1] - M[i,j+1] = M[i+1,i]- M[i,i]$, so the unit Monge property holds for $j>i$. The unit Monge property clearly holds for $j<i$, by definition of lower triangular matrix. It remains to verify that the unit Monge property holds for $j=i$. In that case
the Monge inequality $M[i+1,j]-M[i,j] \leq M[i+1,j+1] - M[i,j+1]$ holds because, substituting $j$ by $i$, we have that 
$M[i+1,i] -  M[i,i] \leq M[i+1,i+1] - M[i+1,i+1] + M[i+1,i] - M[i,i] =  M[i+1,i] - M[i,i]$. 

Note that we do not need to explicitly modify $M$ (i.e. compute the artificial values of $M$). Instead we use the fact that, by our choice of the values that we filled, an element of the matrix $P$ is zero wherever an undefined value of $M$ is involved. 
\end{proof}

We would like to use an emulator for $M_l$ and an emulator for $M_u$, but we need to eliminate paths from $r_i$ to $c_j$ for $i<j$ in $M_l$, and for $i>j$ in $M_u$. To this end we modify the construction of the emulator. For ease of presentation, we describe the construction for an $(n/2) \times n$ right substochastic matrix $P$ in which we want the distance from $c_i$ to $r_j$ to be infinite for $i>j$. It is easy to modify the construction to work for $(n/2) \times \alpha n$ matrices, for $2n \times \alpha n$ matrices, or for the case where the infinite distances are for $i<j$.
The elements of $P$ are classified into two types: an element $P[i,j]$ of $P$ is \EMPH{artificial} if $i>j$ and \EMPH{original} otherwise (that is, the $i$'th row of $P$ begins with a prefix of $i-1$ artificial elements followed by a suffix of original elements).  In the following recursive algorithm, the type of elements is always defined with respect to the full input matrix $P$ (at the top level of the recursion). 

We divide the $(n/2) \times n$ matrix $P$ into two $(n/4) \times n$ matrices $P_\uparrow$ and $P_\downarrow$.
Both $P_\uparrow$ and $P_\downarrow$ have possibly empty prefix of columns containing just artificial elements (category I), then an infix of columns containing both original and artificial elements (category II), and finally a suffix of columns containing just original elements (category III). 
Let $P'$ be the submatrix of $P_\uparrow$ induced by all the columns of category II and by the columns of category III that are not entirely zero.  Define submatrix $P''$ of $P_\downarrow$ similarly.
Note that, by definition of artificial elements (with respect to the original top-level matrix), the number of columns of category II is bounded by the number of rows in $P_\uparrow$ and $P_\downarrow$. Also, by definition of right substochastic matrix, the number of columns of category III that are not entirely zero is also bounded by the number of rows of $P_\uparrow$ and $P_\downarrow$.  
Hence, both $P'$ and $P''$ are $(n/4) \times (n/2)$ matrices for which we can recursively compute emulators $H'$ and $H''$, respectively.

We construct the emulator $H$ of $P$ from the emulators $H'$ and $H''$ as follows. 
For each column $y$ of $P$ we add a vertex $c[y]$ to $H$.  
If column $y$ in $P_\uparrow$ consists of just artificial elements (category I), we do nothing.  
If column $y$ in $P_\uparrow$ is of category II, then column $y$ is also present in $P'$, so it has a corresponding column vertex $c'[y]$ in $H'$.  Connect $c'[y]$ to $c[y]$.
If column $y$ is of category III, connect vertex $c'[y'_\to]$ of $H'$ to $c[y]$ with an edge of weight 
\[
\sum_{\substack{i > \floor{n/4} \\ j\geq y}} P[i,j]. 
\]
Treat all the columns in $P_\downarrow$ similarly, except when column $y$ is of category III, connect vertex $c''[y''_\to]$ of $H''$ to $c[y]$ with an edge of weight zero.
(For the definition of $y'_\to$ and $y''_\to$ see the proof of the bipartite case.)
An easy inductive proof shows that in this construction $r_i$ is connected to $c_j$ if and only if $i<j$.
 
The size of the emulator again satisfies the recurrence $N(n) \leq 2N(n/2) + O(n)$, and therefore is of $O(n \log n)$ overall. This concludes the proof of Lemma~\ref{L:emulator}.

\section{Distance Emulators for Planar Graphs}
\label{S:construction}

In this section we explain how to incorporate the unit-Monge emulators of Lemma~\ref{L:emulator} into the construction of planar graph emulators. The construction follows closely to the encoding scheme of Abboud \etal~\cite{agmw-nocpg-17}.  The crucial difference is that their unit-Monge matrices are represented by a compact non-graphical encoding, which we replace by the distance emulator of Lemma~\ref{L:emulator}. There are additional differences that stem from the fact that  Abboud \etal \ concentrated on the size of the encoding, and did not consider the construction time.  To achieve efficient construction time we have to modify the construction slightly.  These modifications are not directly related to the unit-Monge distance emulators.  We will describe the construction algorithm without going into details, which were treated in Abboud \etal~\cite{agmw-nocpg-17}.  Instead, we attempt to give some intuition and describe the main components of the construction and their properties.  
See Appendix~\ref{A:details} for a self-contained detailed description.

Compact representations of distances in planar graphs have been achieved in many previous works by decomposing the graph using small cycle separators and exploiting the Monge property (not the
unit-Monge property) to compactly store the distances among the vertices of the separators. The 
main obstacle in exploiting the unit-Monge property using this approach is that small cycle separators do not necessarily exist in planar graphs with face of unbounded size. 
In weighted graphs this difficulty is overcome by triangulating the graph with edges with sufficiently large weights that do not affect distances. 
In unweighted graphs, however, this does not work. 

\paragraph{Slices.}
To overcome the above difficulty, Abboud \etal \ 
showed that any unweighted planar graph contains a family of nested subgraphs (called \EMPH{components}) with some desirable properties. This nested family can be represented by a tree $\mathcal{K}$, called the \EMPH{component tree}, in which the ancestry relation corresponds to enclosure in $G$. Each component $K$ in $\mathcal K$ is a subgraph of $G$ that is enclosed by a simple cycle $\bd K$, called the \EMPH{boundary cycle} of $K$. The root component is the entire graph $G$, and its boundary cycle is the infinite face of $G$. A \EMPH{slice $K^\circ$} is defined as the subgraph of $G$ induced by all faces of $G$ that are enclosed by component $K$, and not enclosed by any descendant of $K$ in $\mathcal{K}$. The boundary cycle $\bd K$ is called the \EMPH{external hole} or the \EMPH{boundary} of $K^\circ$. The boundary cycles of the children of $K$ in $\mathcal K$ are called the \EMPH{internal holes} of $K^\circ$. Note that a slice may have many internal holes.
Abboud \etal\ showed that, for any choice of $w\in [n]$, there exists a component tree $\mathcal K$ such that the total number of vertices in the boundaries of all the slices is $O(n/w)$~\cite[Lemma~3.1]{agmw-nocpg-17}. 
For completeness, we present this proof in Appendix~\ref{SS:slice}.

\paragraph{Regions.}
Constructing distance emulator for each slice separately is not good enough na\"ively, as there could potentially be $\Omega(n/w)$ holes in a slice, and we cannot afford to store distances between each pair of holes, even if we use the unit-Monge property.
To overcome this, each slice $K^\circ$ is further decomposed recursively into smaller subgraphs called \EMPH{regions}. 
A \emph{Jordan cycle separator} is a closed curve in the plane that intersects the embedding only at $O(w)$ vertices. A slice $K^\circ$ is recursively decomposed in a process described by a binary \EMPH{decomposition tree $\mathcal R_K$}; 
each node $R$ of the tree $\mathcal R_K$ is a region of $K^\circ$, with the root being the entire slice $K^\circ$. There is a unique Jordan cycle separator $C_R$ corresponding to each internal node $R$ of $\mathcal{R}_K$.  
We abuse the terminology by referring to an internal hole of slice $K^\circ$ lying completely inside region $R$ as a \emph{hole} of $R$.
We say that a Jordan curve $C_R$ \emph{crosses} a hole $H$ of $R$ if $C_R$ intersects both inside and outside of $\bd H$. 
All Jordan cycle separators discussed here are mutually non-crossing (but may share subcurves) and each crosses a hole at most twice.

We next describe how the separators $C_R$ are chosen. Let $K$ be a component, and let $K'$ be the child component of $K$ in $\mathcal K$ with the largest number of vertices. We designate the hole $\bd K'$ of the slice $K^\circ$ as the \EMPH{heavy} hole of $K^\circ$.\footnote{This is the main difference compared with the encoding scheme of Abboud \etal\ \cite{agmw-nocpg-17}.  We will see the benefit of this technicality in Section~\ref{SS:time}.} 
For any region $R$ of $K^\circ$, let \EMPH{$R^\bullet$} denote the union of $R$ and all the subgraphs of $G$ enclosed by the internal holes of $R$ except for the heavy hole of $K^\circ$ (in other words, $R^\bullet$ is the region $R$ with all its holes except the heavy hole ``filled in''). For example, given a slice $K^\circ$, one has $(K^\circ)^\bullet = K \setminus \mathit{int}(\bd H^*)$, where $\mathit{int}(\bd H^*)$ is the interior of the heavy hole $H^*$ of $K^\circ$ (if any). 
Abboud \etal~\cite[Lemma~3.4]{agmw-nocpg-17} proved that for any region $R$ one of the following must hold (and can be applied in time linear in the size of $R$):
\begin{itemize}
\item One can find a small Jordan cycle separator $C_R$ that crosses no holes of $R$ and is balanced with respect to the number of terminals in $R^\bullet$. That is, $C_R$ encloses a constant fraction of the weight with respect to a weight function assigning each terminal in $R$ unit weight and each face corresponding to a non-heavy hole of $R$ weight equal to the number of terminals enclosed (in $R^\bullet$) by that hole. Such a separator is called a \EMPH{good} separator.
\item There exists a hole $H$ such that one can \emph{recursively} separate $R$ with small Jordan cycle separators, each of which crosses $H$ exactly twice, crosses no other hole, and is balanced with respect to the number of vertices of $\bd H$ in the region $R$. It is further guaranteed that each resulting region $R'$ along this recursion satisfies that $(R')^\bullet$ contains at most half the terminals in $R^\bullet$. The hole $H$ is called a \EMPH{disposable hole}, and the recursive process is called the \EMPH{hole elimination} process. The hole elimination process terminates if a resulting region $R'$ either contains a constant number of vertices of the disposable hole $\bd H$, or if $(R')^\bullet$ contains no terminals. Thus, the hole elimination process is represented by an internal subtree of $\mathcal R_K$ of height $O(\log n)$.
\end{itemize}

We begin with the region consisting of the entire slice $K^\circ$. If a good separator is found, we use it to separate a region $R$ into two subregions, each containing at most a constant fraction of the terminals in $R^\bullet$. If no good separator is found then we use the hole elimination process on a disposable hole $H$. We stop the decomposition as soon as a region contains at most one hole (other than the heavy hole) or at most one terminal.
Thus, the overall height of $\mathcal R_K$ is $O(\log k \cdot \log n)$. We mentioned that it is guaranteed that each Jordan cycle separator used crosses a hole at most twice. This implies that, within each region $R$, the vertices of each hole $\bd H$ of $K^\circ$ that belong to $R$ can be decomposed into $O(\log k \cdot \log n)$ intervals called \EMPH{$\bd H$-intervals}; each Jordan cycle separator increases the number of such intervals in each subregion by at most one. We refer to those intervals that belong to the boundary cycle $\bd K$ of $K^\circ$ or to the heavy hole $\bd H^*$ of $K^\circ$ as \EMPH{boundary intervals}.

\subsection{The construction}
\label{SS:construction}
We are now ready to describe the construction of the 
distance emulator.
We will build the distance emulator of $G$ from the leaves of the component tree $\mathcal{K}$ to the root (which is the whole graph $G$).
For each component $K$ in $\mathcal{K}$, we will associate an emulator that preserves the following distances.
\begin{itemize} 
\item \emph{terminal-to-terminal} distances in $K$ between any two terminals in $K$.

\item \emph{terminal-to-boundary} distances in $K$ between any terminal in $K$ and vertices of boundary $\bd K$.
\end{itemize}
We emphasize that the distances being preserved are those in $K$, not $K^\circ$.
The distance emulator for $G$ can be recovered from the emulator associates with the root of the component tree $\mathcal{K}$.

We now describe the construction of the emulator associated with component $K$, assuming all the emulators for the children of $K$ in $\mathcal{K}$ have been constructed.
The emulator of $K$ is constructed by adding the following edges and unit-Monge emulators from Lemma~\ref{L:emulator} to the emulators obtained from the children of $K$.
\begin{enumerate}  
\item  
\label{b2b}
Add unit-Monge emulators representing the distances in $K^\circ$ between all pairs of vertices on the \emph{boundary} $\bd K$ of $K$ or on the \emph{heavy hole} $\bd H^*$ of $K$.

\item For each region $R$ in the decomposition tree $\mathcal{R}_K$ of the slice $K^\circ$, add the following edges and the unit-Monge emulators according to the type of $R$.
\begin{enumerate}[(A)]
\item If $R$ is an internal node of $\mathcal{R}_K$ with separator $C_R$ (either good or hole-eliminating):

\begin{enumerate}
\item   
\label{t2sep}
Add an edge between each \emph{terminal} in $R^\bullet$ and each vertex on the \emph{separator} $C_R$. The weight of each edge is the distance in $R^\bullet$ between the two vertices.

\item   
\label{sep2b}
For each of the two subregions $R_1$ and $R_2$ of $R$, add a unit-Monge emulator representing the distance in $R_i^\bullet$ between the \emph{separator} $C_R$ and each \emph{boundary interval} of $R_i$.

\item If $R$ corresponds to a cycle separator $C_R$ eliminating a disposable hole $\bd H$:

\begin{enumerate} 
\item 
\label{h2b}
If $R$ is the root of the subtree of $\mathcal{R}_K$ representing the hole-elimination process of a \emph{hole} $H$, add a unit-Monge emulator representing the distances between $\bd H$ and each \emph{boundary interval} of $R$. The distances are in the subgraph obtained from $R^\bullet$ by removing the subgraph strictly enclosed by $\bd H$.
\item 
\label{sep2h-elim}
For each of the two subregions $R_1$ and $R_2$ of $R$, add a unit-Monge emulator representing the distance in $R_i^\bullet$ between the \emph{separator} $C_R$ and each
 \emph{$\bd H$-interval} of $R_i$. 
\end{enumerate}
\end{enumerate}

\item If $R$ is a leaf of $\mathcal{R}_K$ and all terminals in $R^\bullet$ are enclosed by a single non-heavy hole $\bd H$:

\begin{enumerate} 
\item   
\label{leaf-h2b}
Add a unit-Monge emulator representing the distances between $\bd H$ and each \emph{boundary interval} of $R$. The distances are in the subgraph obtained from $R^\bullet$ by removing the subgraph strictly enclosed by $\bd H$.
\end{enumerate}

\item If $R$ is a leaf of $\mathcal{R}_K$ that contains exactly one terminal $t$:

\begin{enumerate} 
\item    
\label{leaf-t2b}
Add an edge between the only \emph{terminal} $t$ and each vertex on 
each \emph{boundary interval} of $R$. 
The weight of each edge is the distance in $R^\bullet$ between the two vertices.

\end{enumerate}

\end{enumerate}

\end{enumerate}

The proof of correctness is essentially the same as that in Abboud \etal\ \cite[Lemma 3.6]{agmw-nocpg-17}.

\subsection{Space analysis}
\label{sec:space}

The $O(\sqrt{k\cdot n}\log^3 n)$ space analysis is essentially the same as in Abboud \etal\ \cite{agmw-nocpg-17}. The main difference in the algorithm is the introduction of the heavy hole as part of the boundary of a slice. This will allow an efficient construction time and does not affect the construction space because it increases the total size of the boundary by a constant factor.

Since a boundary cycle appears at most twice as a boundary of a slice (once as the external boundary and once as the heavy hole), the total size for all unit-Monge emulators for boundary-to-boundary distance (\ref{b2b}) is $O((n/w)\log n)$.

We next bound the total space for terminal-to-separator distances (type \ref{t2sep}) over all slices.
Whenever a terminal stores its distance to a separator, the total number of terminals in its region $R^\bullet$ decreases by a constant factor within $O(\log n)$ levels of the decomposition tree (either immediately if the separator is a good separator, or within $O(\log n)$ levels of the hole elimination process), so this can happen $O(\log^2n)$ times per terminal. Since each separator has size $O(w)$, the total space for representing distances of type \ref{t2sep} is $O(kw\log^2n)$.

The space required to store a single unit-Monge distance emulator representing distances between a separator $C_R$ and a boundary interval $b$ (type \ref{sep2b}) is $O(|C_R| + |b| \log |b|)$ (The rows of such an emulator correspond to the vertices of $b$, and the columns to vertices of $C_R$. Each row in the corresponding right sub-stochastic matrix has at most a single non-zero entry). 
We note the following facts: (i) The size of any separator $C_R$ is $O(w)$. (ii) The depth of the recursion within each slice is $O(\log^2 n)$. (iii) There are at most $k$ regions at each level of the recursion. (iv) The total number of boundary intervals in all regions at a specific recursive level is within a constant factor from the number of regions at that level, i.e., $O(k)$. Thus, the total size of all boundary intervals at a fixed level of the recursion is at most $O(k)$ plus  the size of the boundary of the slice (boundary intervals are internally disjoint, and there are $O(k)$ intervals in each region). (v) The total boundary size of all slices combined is $O(n/w)$. By the above facts, the total space for type \ref{sep2b} emulators is $O((kw + (k + n/w)\log n)\log^2 n) = O((kw + n/w)\log^3 n)$.

Hole-to-boundary distances (types \ref{h2b} and \ref{leaf-h2b}) are stored for at most one hole in each region along the recursion, and require $O(|\bd H| + |b|\log|b|)$ space. By the same arguments as above, and since the total size of hole boundaries $\bd H$ is bounded by the total size of slice boundaries, the total space for types \ref{h2b} and \ref{leaf-h2b} emulators is $O((k+n/w)\log^3 n)$.

The space for separator-to-hole emulators (type \ref{sep2h-elim}) is bounded by the separator-to-boundary distances (type \ref{sep2b}).

To bound the number of terminal-to-boundary edges (type \ref{leaf-t2b}) note that each terminal stores distances to boundary intervals that are internally disjoint for distinct terminals. Since the number of boundary terminals in a region is $O(\log n)$, the total space for type \ref{leaf-t2b} distances is $O(k\log n + n/w)$.

The overall space required is thus $O((kw + n/w)\log^3 n)$; setting $w=\sqrt{k\cdot n}$ gives the bound $O(\sqrt{k\cdot n}\log^3 n)$.

\subsection{Time analysis}
\label{SS:time}
Recall that Lemma~\ref{L:emulator} assumes that the input is an $n \times n$ unit-Monge matrix represented in $O(n\log n)$ bits as two arrays $U$ and $V$, as well as an array $P$, specifying, for each row $i$ of the right substochastic $2(n-1) \times (n-1)$ matrix $P$, the index of the (at most) single one entry in that row (see Lemma~\ref{L:substochastic}). 
Lemma~\ref{L:emulator} will be invoked on various $x\times x$ unit-Monge matrices that represent distances among a set of $x$ vertices on the boundary of a single face  of a planar graph $G$ with $n$ vertices (or sometimes, on two sets of vertices on two distinct faces).

\begin{lemma}\label{L:MSSP}
Let $G$ be a planar graph with $n$ nodes. Let $f$ be a face in $G$. One can compute the desired representation of the unit-Monge matrix $M$ representing distances among the vertices of $f$ in $O(n)$ time. 
\end{lemma}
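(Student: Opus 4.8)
The plan is to combine the classical multiple-source shortest path (MSSP) machinery of Klein with the structural characterization from Lemma~\ref{L:substochastic}. First I would recall the exact objects that need to be produced: by Lemma~\ref{L:substochastic}, representing the $x \times x$ unit-Monge matrix $M$ (where $x = |f|$) amounts to producing two length-$x$ vectors $U$ and $V$, together with a single array describing, for each of the $2(x-1)$ rows of the right substochastic matrix $P$, the position of its unique one (or the fact that the row is all zero). The vectors $U$ and $V$ are, up to additive shifts, just single rows and columns of $M$, i.e.\ single-source distances from two fixed vertices of $f$; these can be obtained by running BFS (since $G$ is unweighted) from $v_1$ and from $v_n$ in $O(n)$ time. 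The only real content is computing $P$.

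The key observation is that $P[i,j] \ge 0$ precisely when the shortest-path structure is ``monotone'', and the support of $P$ in row $i$ records where the shortest-path tree rooted at $v_i$ and the tree rooted at $v_{i+1}$ ``diverge'' along the face. Concretely, I would invoke Klein's MSSP data structure: after $O(n\log n)$ preprocessing (or $O(n)$ with the refinements of Cabello--Chambers or Eisenstat--Klein for unweighted graphs), one maintains the shortest-path tree rooted at the current source as the source moves from one vertex of $f$ to the next along the face, with each edge of $G$ entering and leaving the tree $O(1)$ times amortized over a full traversal of $f$. For each pivot step from $v_i$ to $v_{i+1}$, the set of vertices whose distance changes is contiguous along $f$ (a standard consequence of planarity / non-crossing of shortest paths), and the boundary of that set is exactly the index recorded by the nonzero entry of row $i$ of $P$. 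So the whole traversal costs $O(n)$ amortized across all $i$, and I extract the $P$-array as a by-product.

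The main steps, in order, are: (1) run BFS twice to get the boundary row and boundary column of $M$, hence $U$ and $V$; (2) build the MSSP structure rooted at $v_1$; (3) slide the source around $f$, and at each of the $x-1$ pivots read off, from the set of vertices whose tree-distance changed, the single ``breakpoint'' along $f$ — this is the support of the corresponding row of $P$ (handling the doubling $2(x-1)$ rows exactly as in the proof of Lemma~\ref{L:substochastic}, splitting each $P'[i,j]\in\{0,1,2\}$ into a $2\times 1$ block); (4) assemble $U$, $V$, and the $P$-array into the required $O(x\log x) = O(n\log n)$-bit representation.

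The hard part — or at least the part requiring care — is step (3): arguing that the distance-change region at each pivot is a single contiguous interval along $f$ whose endpoint coincides with the entry dictated by the definition $P'[i,j] = M[i{+}1,j{+}1] + M[i,j] - M[i,j{+}1] - M[i{+}1,j]$, and that the amortized cost of tracking these regions over the whole traversal is $O(n)$ rather than $O(n)$ per pivot. This is where the non-crossing property of shortest paths in planar graphs (the same fact underlying the Monge inequality for $M$) is used, together with Klein's amortization bound; the unweighted assumption lets us use BFS and integer-priority structures so no $\log$ factor is introduced beyond the one already inherent in the $O(n\log n)$-bit output. Everything else is bookkeeping already carried out in the proof of Lemma~\ref{L:substochastic}.
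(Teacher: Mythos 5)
Your overall plan coincides with the paper's: obtain $U$ and $V$ from single-source distances, then slide the source around $f$ with the Eisenstat--Klein MSSP algorithm and read off the support of $P$ from what changes at each pivot. However, the extraction step (your step (3)) has a genuine gap. You propose to identify, at each pivot from $v_i$ to $v_{i+1}$, ``the set of vertices whose distance changes'' and take its boundary. Writing $\delta_j = M[i+1,j]-M[i,j]$, that set is $\{v_j : \delta_j \neq 0\}$, which is a prefix where $\delta_j=-1$ followed by a suffix where $\delta_j=+1$; its total size summed over all pivots can be $\Theta(n^2)$ (think of a grid), so you cannot afford to enumerate it, and MSSP structures do not hand it to you per pivot in any case. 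The $O(1)$ amortized bound you cite concerns edges entering and leaving the shortest-path \emph{tree}, which is not the same as the set of vertices whose \emph{distance} changes, and neither quantity directly yields the breakpoints of the sequence $(\delta_j)_j$ without a scan. Also, a row of $P'$ can have two breakpoints (the first $0$ and the first $1$ of $\delta$), not ``a single breakpoint.''

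The missing ingredient is the observation the paper uses: for the face edge $v_jv_{j+1}$ one has $\slack_i(v_jv_{j+1}) = 1 + M[i,j] - M[i,j+1]$, hence
\[
P'[i,j] \;=\; \slack_i(v_jv_{j+1}) - \slack_{i+1}(v_jv_{j+1}),
\]
so the nonzero entries of row $i$ of $P'$ are exactly the edges of $f$ whose slack changes during the $i$-th pivot. Since Eisenstat--Klein \emph{explicitly} maintains the slacks of all edges and guarantees only $O(n)$ slack changes over the entire traversal, instrumenting those updates (restricted to the edges of $f$) yields all nonzero entries of $P$ in total $O(n)$ time. Without this identity, or an equivalent mechanism that charges each reported breakpoint to an $O(1)$-amortized event of the MSSP algorithm, your step (3) does not meet the $O(n)$ bound.
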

\begin{proof}
We tweak the linear-time multiple source shortest paths (MSSP) algorithm of Eisentat and Klein~\cite{ek-ltamf-13}. Along its execution, the MSSP algorithm maintains the shortest path trees rooted at each vertex of a single distinguished face $f$, one after the other.
This is done by transforming the shortest path tree rooted at some vertex $v_i$ to the shortest path tree rooted at the next vertex $v_{i+1}$ in cyclic order along the distinguished face $f$.  Let $d_i(u)$ denote the distance from the current root $v_i$ to $v$.  The slack of an arc $uw$ with respect to root $v_i$ is defined as $\slack_i(uw) \coloneqq c(uw) + d_i(u) - d_i(w)$. 
Thus, for all arcs in the shortest path tree the slack is zero, and for all arcs not in the shortest path tree the slack is non-negative. Eisenstat and Klein show that throughout the entire execution of the algorithm there are $O(n)$ changes to the slacks of edges. Their algorithm explicitly maintains and updates the slacks of all edges as the root moves along the distinguished face. 
By definition of the matrix $P$, 
\[
P[i,j] ~=~ M[i+1,j+1] + M[i,j] - M[i+1,j] - M[i,j+1]. 
\]
Consider an edge $v_jv_{j+1}$ along the distinguished face. Its slack with respect to root $v_i$ is $1 + M[i,j] - M [i,j+1]$. Its slack with respect to root $v_{i+1}$ is $1 + M[i+1,j] - M [i+1,j+1]$. Therefore, $P[i,j] = \slack_i(v_j v_{j+1}) - \slack_{i+1}(v_j v_{j+1})$. It follows that one can keep track of the nonzero entries of $P$ by keeping track of the edges of the distinguished face whose slack changes. The entries of the arrays $U$ and $V$ can be obtained in total $O(n)$ time since they only depend on the distances from a single vertex of $f$.
\end{proof}

We can now analyze the construction time.
The boundary-to-boundary distances (type \ref{b2b}) in a single slice are computed by a constant number of invocations of MSSP (Lemma~\ref{L:MSSP}) on $K^\bullet$. We then compute the emulator for each unit-Monge matrix on $x$ vertices in $O(x \log x)$ time using Lemma~\ref{L:emulator}. 
For the total MSSP time, we need to bound the total size of the regions $R^\bullet$ on which we invoke Lemma~\ref{L:MSSP}. This is where we use the fact that the heavy hole of a slice $K^\circ$ contains at least half the vertices of $K$. Since $K^\bullet$ does not include the heavy hole of $K$, the standard heavy path argument implies that the number of slices in which a vertex $v$ participates in invocations of the MSSP algorithm is at most 1 plus $O(\log n)$ (the number of non-heavy holes $v$ belongs to). Thus, each vertex $v$ participates in invocations of MSSP for $O(\log n)$ slices. In each invocation of MSSP each vertex $v$ is charged $O(\log n)$ time, so the total time for all MSSP invocations is $O(n\log^2n)$. To bound the total time of the invocations of Lemma~\ref{L:emulator}, note that
each boundary cycle appears at most twice as a boundary of a slice (once as the external boundary and once as the heavy hole). Hence, the total time for all invocations of Lemma~\ref{L:emulator} is $O((n/w)\log n)$. Overall we get that the total time to compute all the emulators of type~\ref{b2b} is $O(n\log^2 n + (n/w)\log n) = O(n \log^2 n)$.

The terminal-to-separator distances (type \ref{t2sep}) are computed by running MSSP twice, once for the interior of the separator $C_R$ in $R^\bullet$ and once for the exterior of $C_R$ in $R^\bullet$. This takes $O(|R^\bullet| \log|R^\bullet|)$ time. 
Then, we can report the distance, within each of these two subgraphs of $R^\bullet$, from each vertex of $C_R$ to any other vertex in $O(\log n)$ time. Next, for each terminal $t$ in $R^\bullet$, we compute the distance from $t$ to all $w$ vertices of $C_R$ by running FR-Dijkstra~\cite{fr-pgnwe-06} on the graph consisting of the edges between $t$ and the vertices of $C_R$ as well as the two complete graphs representing the distances among the vertices of $C_R$ in the interior and exterior. This takes $O(w \log^2w \log n)$ time. The $ O(w\log^2 w)$ term is the running time of FR-Dijkstra and the additional $O(\log n)$ factor is because each distance accesses by FR-Dijkstra is retrieved on-the-fly from the MSSP data structure. 

We analyze separately the total cost of all the MSSP computations and the total cost of computing the edge weights. For the latter, we have argued in the analysis of the emulator's size that each terminal participates in the computation of edges of type \ref{t2sep} in $O(\log^2 n)$ regions along the entire construction. Therefore, the total time to compute all such edges is $O(kw \log^2 w \log^3 n)$. For the total MSSP time, we need to bound the total size of the regions $R^\bullet$ on which we run MSSP. This is done in a similar manner to the bound on the time for MSSP invocations for type~\ref{b2b} above, using the standard heavy path argument.
We argued that the number of slices in which a vertex $v$ participates in invocations of the MSSP algorithm is at most 1 plus $O(\log n)$ (the number of non-heavy holes $v$ belongs to). Thus, each vertex $v$ participates in invocations of MSSP for $O(\log n)$ slices. Since the depth of the decomposition tree of a slice is $O(\log^2 n)$, each vertex participates in at most $O(\log^3 n)$ invocations of MSSP in the computation of edges of type~\ref{t2sep}. 
Each vertex is charged $O(\log n)$ time in  each invocation of MSSP it participates in, so the total time for all invocations of MSSP in the process of computing edges of type \ref{t2sep} is $O(n \log^4 n)$.  

To bound the time required to compute the emulators of type~\ref{sep2b} (separator-to-boundary distances), note that the appropriate unit-Monge matrices can be computed by MSSP invocations within the same time bounds analyzed for the MSSP invocations for terminal-to-separator edges above. Hence, we only need to account for the additional time spent on constructing the emulators by invocations of Lemma~\ref{L:emulator}, which is linear in the total size of these emulators, which we have already shown above to be $O((kw+n/w)\log^3 n)$.

The analysis for the time required for the hole-to-boundary emulators (types \ref{h2b} and \ref{leaf-h2b}) uses the same arguments as the analysis for type~\ref{sep2b}. Again, the MSSP time is $O(n \log^4 n)$, and the time for invocations of Lemma~\ref{L:emulator} is linear in the size of these emulators, which is $O((k+n/w)\log^3 n)$.

The time to compute the separator-to-hole emulators (type~\ref{sep2h-elim}) is bounded by the time to compute the separator-to-boundary emulators (type~\ref{sep2b}).

To compute the terminal-to-boundary edges we again need to invoke MSSP once on $R^\bullet$ for each leaf region $R$, and then query the distance in $O(\log n)$ time per distance. The time for MSSP is dominated by the MSSP time accounted for in type~\ref{sep2b} emulators, and since we have shown that the total number of such edges is $O(k + n/w)$, the total time to compute them is $O((k +n/w)\log n)$.

The overall construction time is therefore dominated by the $O(n\log^4 n)$ term. 
Combined with the space analysis of Section~\ref{sec:space}, we establish Theorem~\ref{Th:emulator-subset}.

\paragraph*{Acknowledgment.}  The authors would like to thank Timothy Chan, Jeff Erickson, Sariel Har-Peled, and Yipu Wang for helpful discussions.  
The first author express special thanks to \begin{CJK*}{UTF8}{bsmi}施鴻逸\end{CJK*} (Hong-Yi Shih) for discussion back in NTU in the early days that made this paper possible.

\bibliographystyle{plainurl}

\newpage

\appendix
\section{Missing Details}
\label{A:details}

The construction follows closely to the sublinear-size encoding scheme of undirected unweighted planar graphs described by Abboud \etal~\cite{agmw-nocpg-17}, while carefully replacing each encoding of unit-Monge matrix with the corresponding distance emulator guaranteed by Lemma~\ref{L:emulator} and Section~\ref{S:monge-emulator}.
One of the main difference here is that Abboud \etal\ did not focus on or analyze the construction time.  We have introduced some changes to their construction in order to reduce the construction time of the emulator.  We will emphasize the changes in the subsequent subsections.

\subsection{Slices}
\label{SS:slice}

The \EMPH{radial graph $G^\diamond$} of a plane graph $G$ is the face-vertex incidence graph of $G$; the vertex set of $G^\diamond$ consists of the vertices and faces of $G$, and the edge set of $G^\diamond$ corresponds to all the vertex-face incidences in $G$.  The radial graph $G^\diamond$ is also a plane graph with its embedding inherited from $G$.  Notice that for any plane graph $G$ and its dual $G^*$, the radial graph $G^\diamond$ is identical.

Perform a breadth-first search from the unbounded face of $G$ in $G^\diamond$.  The \EMPH{level} of a vertex or face in $G$ is defined to be its depth in the breadth-first search tree of $G^\diamond$.  It is well-known \cite{fp-dtert-93,KleinMS13} that the odd levels (which correspond to the vertices in $G$) forms a laminar family of edge-disjoint simple cycles.  
Formally, let \EMPH{$\mathcal{K}_{\ge i}$} be the collection of connected components of subgraph of the dual graph $G^*$ induced by faces of levels at least $2i$;
for simplicity we call each connected components in $\mathcal{K}_{\ge i}$ as \EMPH{component} at level $i$.  The collection of components at all levels and their containment relationship forms a tree structure, which we refer to as the \EMPH{component tree $\mathcal{K}$}.%
\footnote{To describe the emulator construction we will be using several different trees whose nodes correspond to very different objects.  To avoid confusion, we will consistently denote trees whose nodes are subgraphs of $G$ using calligraphy letters, and denote trees whose nodes are vertices of $G$ using normal uppercase letters.}
Component $K'$ is a descendant of another component $K$ if the faces of $K'$ are a subset of the faces of $K$. The root of the component tree corresponds to the component of $G$ that is the union of all faces in $G$ except for the unbounded face.  The boundary of a component $K$ always forms a simple cycle in $G$; we often refer it as the \EMPH{boundary cycle $\bd K$} of $K$.  

A \emph{slice} of $G$ is defined to be the subgraph of $G$ induced by all faces of $G$ enclosed by the boundary cycle of some component at level $i$ and not strictly enclosed by the boundary cycles of any component at level $j$ for some $j > i$. 
In this paper we will only consider slices of the following form:  For a fixed level $\ell$ and some integer \EMPH{width} $w$, consider any slice defined by some component at the levels of the form $\ell_\alpha \coloneqq \ell + \alpha w$ for all integer $\alpha \ge 0$, as well as level $0$ (which the component corresponds to the whole graph $G$).  
For each component $K$ at level $\ell_\alpha$ in the component tree $\mathcal{K}$ one can define \EMPH{slice $K^\circ$} between component $K$ and all the descendants of $K$ in $\mathcal{K}$ at level $\ell_{\alpha+1}$.
Each slice inherits its embedding from $G$.  
By definition $K^\circ$ is a subgraph of $K$.
The boundary cycle of $K$ is also the boundary cycle of the unbounded face of the slice $K^\circ$, called the \EMPH{external hole} or \EMPH{boundary} of $K^\circ$; and the boundary cycles of components at level $\ell_{\alpha+1}$ in $K^\circ$ are called the \EMPH{internal holes} of $K^\circ$.  Bear in mind that the definition of external and internal holes depends on the slice we are currently in; an internal hole of the slice $K^\circ$ is the external hole of some slice ${K'}^\circ$ for some descendant $K'$ of $K$.

For the sake of simplicity we assume that the unbounded face of $G$ has length $3$ by adding a triangle enclosing the whole graph $G$, and add a single edge between the triangle and $G$.
Now an averaging argument, together with the length assumption on the unbounded face, guarantees that for any fixed width $w$ there will be a choice of $\ell$ such that the sum of the length of all the holes in the slices defined is at most $O(n/w)$ \cite[Lemma~3.1]{agmw-nocpg-17}. 

\subsection{Regions}
\label{SS:region}

Constructing distance emulator directly for each slice is not good enough na\"ively, as there could potentially be $\Omega(n)$ holes in a slice.
Now the strategy is to decompose the slice $K^\circ$ into smaller pieces called \emph{regions}, such that each region has ``small boundary'', has at least one terminal, and has at most a constant number of internal holes.  The distance information will be stored using the regions, such that for each slice $K^\circ$ the distances \emph{in K} between all terminals and vertices on holes are kept by the distance emulator (we will see shortly in Section~\ref{SS:construction} why it is important to store distances with respect to the whole component $K$, not just the slice $K^\circ$).

Each slice $K^\circ$ contains at most $w$ consecutive levels of breadth-first search tree.  One can triangulate the slice $K^\circ$ into $K^\circ_\Delta$ in a way that the breadth-first search tree on $K^\circ$ starting at an auxiliary vertex representing the unbounded face of $K^\circ$ is consistent with the breadth-first search tree of the radial graph $G^\diamond$ that defines the levels and slices \cite[Lemma~3.2]{agmw-nocpg-17}, and each hole of $K^\circ$ has an auxiliary vertex connecting to all the boundary vertices on the hole, except for the internal hole that contains more than half of the vertices in $K$ (if any).  (Such holes are called \EMPH{heavy}; it is straightforward to see that there can be at most one heavy hole in a slice.%
\footnote{This is the main difference to the encoding scheme of Abboud \etal\ \cite{agmw-nocpg-17}.  We will discuss this technicality in Section~\ref{SS:time}.})
More precisely, let \EMPH{$T_K$} be the breadth-first search tree of $K^\circ_\Delta$, 
then vertex $u$ is a parent of vertex $v$ in $T_K$ if and only if $u$ is a grandparent of $v$ in the breadth-first search tree of $G^\diamond$.
All the edges incident to the auxiliary vertex of a hole $\bd H$ belong to $T_K$ if $\bd H$ is external, and exactly one of such edges belongs to $T_K$ if $\bd H$ is internal.
One can compute the triangulation $K^\circ_\Delta$ and the breadth-first search tree $T_K$ from $K^\circ$ in time linear in the size of $K^\circ$.

Fix a slice $K^\circ$ for some component $K$.  We now describe a high-level process to decompose $K^\circ$ into regions using fundamental cycles with respect to $T_K$.  
Given any subgraph $R$ of the slice $K^\circ$ called a \EMPH{region} starting with $R \coloneqq K^\circ$, pick an edge $e_R$ in $K^\circ_\Delta$ but not in $T_K$.  (We will describe how to pick such an edge in Section~\ref{SS:good-disposable}.)  For each region $R$, let $C_R$ be the {\em fundamental cycle} of edge $e
_R$ with respect to $T_K$ (i.e., the cycle composed of $e
_R$ plus the unique path in $T_K$ between the endpoints of  $e_R$).
Each fundamental cycle $C_R$ will be a \emph{Jordan curve} in the plane that separates $R$ into two (potentially trivial) subgraphs that share all the vertices and edges on $C_R \cap R$.
We emphasize that by the construction of $T_K$, cycle $C_R$ only intersects each (external or internal) hole of $K^\circ$ at most twice.
Recursively decompose the two subgraphs into regions.  Then the regions of $K^\circ$ is defined to be the collection of all the regions constructed from recursion.
One can associate a \EMPH{decomposition tree $\mathcal{R}_K$} of slice $K^\circ$ that encodes the region construction process; each node of the tree $\mathcal{R_K}$ is a region of $K^\circ$, and there is a unique fundamental cycle $C_R$ corresponding to each internal node $R$ of $\mathcal{R}_K$.  
We abuse the terminology by calling an internal hole of slice $K^\circ$ lying completely inside region $R$ as a \emph{hole} of $R$.

\subsection{Good cycles and disposable holes}
\label{SS:good-disposable}

What is left is to describe the choice of fundamental cycle that corresponds to each internal node of the decomposition tree $\mathcal{R}_K$.
For any region $R$, let \EMPH{$R^\bullet$} denote the union of $R$ and all the subgraphs enclosed by the internal holes of $R$ except for the heavy one; equivalently, $R^\bullet$ is the subgraph of $K$ induced by all the faces of $K$ enclosed by the (weakly-simple) boundary cycle of $R$, after removing the interior of any hole of $R$ containing more than half of the vertices in $K$ (this technicality was discussed in detail in Section~\ref{SS:time}). 
For example, given a slice $K^\circ$, one has $(K^\circ)^\bullet = K \setminus \mathit{int}(\bd H^*)$, where $\mathit{int}(\bd H^*)$ is the interior of the heavy hole of $K^\circ$ (if any).

Fix a region $R$ in $\mathcal{R}_K$.
We say a fundamental cycle $C$ with respect to $T_K$ is \EMPH{good} for region $R$ if $C$ is a balanced separator with respect to number of terminals in $R^\bullet$,
and $C$ does not intersect the interior of any hole of $R$. 
A hole $\bd H$ of $R$ is \EMPH{disposable} if each subgraph induced on the faces enclosed by the fundamental cycle $C$ of each edge on $\bd H$ contains at most half of the terminals in $R^\bullet$. 
Abboud \etal\ proved the following structural result that either one can find a good fundamental cycle separator and recurse on the two subgraphs, or there must be a disposable hole, in which case one perform a ``hole-elimination'' process to remove the disposable hole.

\begin{lemma}[Abboud \etal\ {\cite[Lemma~3.4]{agmw-nocpg-17}}]
\label{L:good-disposable}
Given an arbitrary region $R$ such that $R^\bullet$ contains more than one terminal, there is either a good fundamental cycle for $R$ or a disposable hole in $R$.
\end{lemma}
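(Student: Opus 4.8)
\textbf{Proof proposal for Lemma~\ref{L:good-disposable}.}
The plan is to analyze a single candidate fundamental cycle and then bootstrap it into a good cycle or a disposable hole. First I would invoke the classical cycle-separator machinery: since $K^\circ_\Delta$ is triangulated and carries the breadth-first search tree $T_K$ of bounded depth $w$ inside its slice, every non-tree edge $e$ has a fundamental cycle $C$ of length $O(w)$, and the standard argument (assign each face of $R^\bullet$ unit weight, or rather assign weights reflecting both the terminals in $R$ and the terminals hidden behind non-heavy holes) produces an edge $e_R$ whose fundamental cycle $C_R$ is weight-balanced. The key point is that crossing $T_K$ at most twice on each hole means $C_R$ intersects each hole of $R$ at most twice; in particular $C_R$ cannot ``weave'' in and out of a hole many times. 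So the only way $C_R$ fails to be good is that it crosses the interior of some hole $\bd H$ of $R$.

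Second, I would do a case analysis on what happens when the balanced cycle $C_R$ does cross some hole $\bd H$. Either (i) no balanced fundamental cycle crosses any hole, in which case we are done with a good cycle; or (ii) every balanced fundamental cycle crosses some hole. In case (ii) I would argue that the crossed hole $\bd H$ must be disposable. The idea is that $C_R$ passing through the interior of $\bd H$ while remaining balanced forces the ``inside'' and ``outside'' pieces of $C_R$ each to carry at most a constant fraction (say $2/3$) of the terminal weight of $R^\bullet$; since the subgraph strictly enclosed by $\bd H$ is entirely on one side of (the non-tree-edge part of) $C_R$, and since $C_R$ was obtained as the fundamental cycle of an edge on (or adjacent to) $\bd H$, every fundamental cycle of an edge of $\bd H$ splits $R^\bullet$ into pieces of bounded terminal weight --- which is exactly the definition of $\bd H$ being disposable. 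One has to be careful here that it is the \emph{same} hole $\bd H$ for all edges on it, and that the heavy hole is excluded from the weighting so that $R^\bullet$ rather than $R$ is the right object; the weight function is designed precisely so that terminals hidden behind non-heavy holes are counted where they sit.

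Third, I would make the ``either/or'' dichotomy clean by choosing the non-tree edge greedily: walk a weighted-median argument on $T_K$ to locate an edge $e_R$ whose fundamental cycle is as balanced as possible; if that cycle avoids all hole interiors it is good, and if it cannot avoid them, track which hole it is forced through and show that this forcing is a structural property of the hole (every median-type split through it is balanced), giving disposability. The linear-time claim follows because $T_K$ and $K^\circ_\Delta$ are computed in linear time, the fundamental cycle of a single edge is read off in $O(w)$ time, and the weighted-median / balance check over the $O(|R|)$ candidate edges is a single bottom-up pass over $T_K$ restricted to $R$; this is where one simply cites the corresponding construction of Abboud \etal~\cite[Lemma~3.4]{agmw-nocpg-17} for the bookkeeping.

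The main obstacle I anticipate is case (ii): proving that a hole crossed by \emph{every} balanced fundamental cycle is necessarily disposable, i.e., that \emph{no single} fundamental cycle of an edge on that hole can be very unbalanced. The subtlety is that a priori the family of fundamental cycles through $\bd H$ could contain both very balanced and very unbalanced ones; ruling out the latter requires using the laminar/nested structure of the holes (an internal hole of $K^\circ$ is the external hole of a descendant slice, so the subgraph behind it is ``monolithic'' with respect to $T_K$-fundamental cycles) together with the fact that $C_R$ crosses $\bd H$ at most twice, so the region ``behind'' $\bd H$ sits on a single side and its entire terminal weight moves together as the non-tree edge slides along $\bd H$. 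Getting this monotonicity-of-weight-along-the-hole argument precisely right is the crux; everything else is routine separator bookkeeping that follows Abboud \etal\ verbatim.
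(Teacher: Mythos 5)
First, note that the paper does not actually prove this lemma: it is stated with an explicit attribution to Abboud~\etal~\cite[Lemma~3.4]{agmw-nocpg-17} and used as a black box, so there is no in-paper proof to measure your attempt against. Judged on its own terms, your sketch sets up the right dichotomy (a balanced fundamental cycle of the triangulated region either avoids all hole interiors, giving a good cycle, or is forced through some hole), but it leaves a genuine gap at exactly the step that carries all of the content of the lemma --- and you say so yourself: you never prove that a hole through which the balanced cycle is forced must be disposable.

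The gap is not merely a missing calculation. Disposability is defined by quantifying over the fundamental cycles of the edges \emph{on} $\bd H$ (every one of them must enclose at most half of the terminals of $R^\bullet$), whereas your case analysis only yields information about \emph{balanced} fundamental cycles that happen to pass through the auxiliary structure of $H$. These are different families of cycles, and the bridge between them is the whole point: if some edge of $\bd H$ had a fundamental cycle enclosing more than half of the terminals, you must show how to extract a good cycle (or a disposable hole) from that, which is where the monotonicity of enclosed weight as the non-tree edge slides along $\bd H$, and the fact that exactly one artificial edge per internal hole lies in $T_K$, actually get used. You also need to pin down \emph{which} hole to declare disposable when different balanced cycles cross different holes; ``track which hole it is forced through'' is not an argument. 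As written, the proposal reduces the lemma to its hardest claim and then defers it, so it cannot stand as a proof; if your intent is to cite Abboud~\etal\ for that step, that is precisely what the paper itself does, and the surrounding scaffolding adds nothing beyond the citation.
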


Now we are ready to describe the construction of $\mathcal{R}_K$.
For each region $R$, if $R$ has at most one hole (other than the heavy hole of the slice $K^\circ$) and at most one terminal then we stop.
Otherwise we apply Lemma~\ref{L:good-disposable} on $R$.  If there is a good fundamental cycle $C$ for $R$ then we choose the edge defining $C$ to be the edge $e_R$ associated with $R$.  Let $R'$ and $R''$ be the two subgraphs induced by the faces of the interior and exterior of $C$.  Attach $R$ with two children $R'$ and $R''$ in the  decomposition tree $\mathcal{R}_K$, and recursively construct the subtrees rooted at $R'$ and $R''$.
If there is a disposable hole $\bd H$, we cut the region $R$ open along each path in $T_K$ from the root to the vertices of $\bd H$.  Attach a binary tree of height $O(\log n)$ rooted at $R$ in $\mathcal{R}_K$ by iteratively finding balanced fundamental cycle separators with respect to the number of vertices on $\bd H$ among those using the artificial edges incident to the auxiliary vertex in the hole $\bd H$ and decompose the region $R$ into two smaller pieces.  The leaves of the attached tree correspond to the regions obtained from cutting the paths in $T_K$ from root to all the vertices of $\bd H$.  Remove all the regions that do not contain any terminals; recursively construct the subtrees rooted at the rest of the regions. 

By definition a good fundamental cycle separator decreases the number of terminals in the resulting regions by a constant factor.
Similarly, by definition of disposable holes, each of the regions obtained at the end of the hole-elimination process has at most half of the terminals in the original region.
Consequently the height of the decomposition tree $\mathcal{R}_K$ is at most $O(\log |T| \cdot \log n)$ for each slice $K^\circ$.

As we mentioned before, while the boundary of a slice $K^\circ$ must be a simple cycle, that's not the case for the regions in $\mathcal{R}_K$ in general; additionally, the boundary of a region consists of fragments of vertices coming from either the holes (both external and internal) or some fundamental cycle separators.  Each fragment that comes from some hole $\bd H$ must be a simple path; we call such fragments \EMPH{$\bd H$-intervals}.  We refer those intervals that belong to the boundary cycle $\bd K$ of $K$ or to the heavy hole $\bd H^*$ of $K^\circ$ as \EMPH{boundary intervals}. Clearly, the region $K^\circ$ has at most two boundary intervals (one is the simple external boundary of $K$ and the other is the simple boundary of the heavy hole of $K^\circ$). By construction of the spanning tree $T_k$, whenever a region is separated by a fundamental cycle separator with respect to $T_k$, the number of boundary intervals increases by at most two. Using the analysis on the recursion height of the decomposition tree $\mathcal{R}_K$, we conclude that for each hole $\bd H$, there are at most $O(\log |T| \cdot \log n)$ $\bd H$-intervals on the boundary of $R$. 
We will use this fact in Section~\ref{SS:construction} to bound the size of the construction.  
\end{document}